\let\proof\@undefined
\let\endproof\@undefined
\DeclareMathOperator*{\argmin}{argmin}
\DeclareMathOperator*{\tr}{tr}
\DeclareMathOperator*{\Det}{det}
\DeclareMathOperator*{\Ram}{Ram}
\providecommand{\abs}[1]{\ensuremath{\left\lvert #1 \right\rvert}}
\providecommand{\norm}[1]{\ensuremath{\left\Vert #1 \right\Vert}}
\providecommand{\fn}[1]{\ensuremath{\left\Vert #1 \right\Vert}_F^2}
\providecommand{\vv}[1]{\textquotedblleft #1\textquotedblright}
\newcommand{\Q}{\mathbb{Q}}
\newcommand{\Z}{\mathbb{Z}}
\newcommand{\C}{\mathbb{C}}
\newcommand{\R}{\mathbb{R}}
\newcommand{\F}{\mathbb{F}}
\newcommand{\gc}{\mathcal{G}}
\newcommand{\ba}{\bar{\alpha}}
\newcommand{\bt}{\bar{\theta}}
\newcommand{\mo}{\mathcal{O}}
\newcommand{\ma}{\mathcal{A}}
\newcommand{\ms}{\mathcal{S}}
\newcommand{\mh}{\mathcal{H}}
\newcommand{\og}{\overline{\mathcal{G}}}
\newcommand{\e}{\mathbf{e}}
\newcommand{\pg}{\mathfrak{P}}
\DeclareMathOperator*{\as}{as}
\DeclareMathOperator*{\Id}{Id}
\DeclareMathOperator*{\dB}{dB}
\providecommand{\abs}[1]{\ensuremath{\left\lvert #1 \right\rvert}}
\providecommand{\norm}[1]{\ensuremath{\left\Vert #1 \right\Vert}}
\providecommand{\fn}[1]{\ensuremath{\left\Vert #1 \right\Vert}_F^2}
\providecommand{\vv}[1]{\textquotedblleft #1\textquotedblright}
\newtheorem{theo}{Theorem}
\newtheorem{lem}[theo]{Lemma}
\newtheorem{prop}[theo]{Proposition}
\theoremstyle{definition}
\newtheorem{defn}{Definition}
\newtheorem{rem}{Remark}
\newcounter{step}
\newcommand{\step}[1]{\medskip\par\refstepcounter{step}\alph{step})\ \emph{#1}}
\newcounter{stepit}
\newcommand{\stepit}[1]{\medskip\par\refstepcounter{stepit}\alph{stepit})\ \emph{#1}}
\begin{document}
\title{Golden Space-Time Block Coded Modulation}
\author{L. Luzzi $\qquad$ G. Rekaya-Ben Othman $\qquad$ J.-C. Belfiore $\qquad$ E. Viterbo%
\thanks{\scriptsize Jean-Claude Belfiore, Ghaya Rekaya-Ben Othman and Laura Luzzi are with Ecole Nationale Supérieure des Télécommunications (ENST), 46 Rue\ %
Barrault, 75013 Paris, France. E-mail: $\tt \{belfiore,rekaya,luzzi\}$@$\tt enst.fr$.\ %
Emanuele Viterbo is with DEIS -\ %
Universit\`{a} Della Calabria, Via P. Bucci, 42/C, 87036 Rende\ %
(CS), Italy.  E-mail: $\tt viterbo$@$\tt deis.unical.it$.%
}}

\maketitle

\begin{abstract}
In this paper we present a block coded modulation scheme for a $2 \times 2$ MIMO system over slow fading channels, where the inner code is the Golden Code. The scheme is based on a set partitioning of the Golden Code using two-sided ideals whose norm is a power of two. In this case, a lower bound for the minimum determinant is given by the minimum Hamming distance. The description of the ring structure of the quotients suggests further optimization in order to improve the overall distribution of determinants.
Performance simulations show that the GC-RS schemes achieve a significant gain over the uncoded Golden Code.
\end{abstract}

\begin{keywords}
Golden Code, coding gain, Space-Time Block Codes, Reed-Solomon Codes
\end{keywords}

\section{Introduction}
The wide diffusion of wireless communications has led to a growing demand for high-capacity, highly reliable transmission schemes over fading channels. The use of multiple transmit and receive antennas can greatly improve performance because it increases the \emph{diversity order} of the system, defined as the number of independent transmit-receive paths.\\
In order to exploit fully the available diversity, a new class of code designs, called \emph{Space-Time Block Codes}, was developed. In the \emph{coherent}, \emph{block fading} model, where the channel coefficients are supposed to be known at the receiver, and remain constant for a time block, the fundamental criteria for code design are 
\begin{enumerate}
\item[-] the \emph{rank criterion}, stating that the difference of two distinct codewords or \vv{space-time blocks} must be a full-rank matrix,
\item[-] the \emph{determinant criterion}, stating that its minimum determinant ought to be maximized \cite{TSC}.
\end{enumerate}
Codes meeting these two criteria can be constructed using tools from algebraic number theory. In particular, by choosing a subset of a \emph{division algebra} over a number field as our code, we ensure that all the nonzero codewords are invertible. If, furthermore, this subset is contained in an \emph{order} of the algebra, the minimum determinant over all nonzero codewords will be bounded from below and will not vanish when the size of the constellation grows to infinity.\\
In the $2 \times 2$ MIMO case, Belfiore \emph{et al.} \cite{BRV} designed the \emph{Golden Code} $\mathcal{G}$, a full-rate, full-rank and information-lossless code satisfying the non-vanishing determinant condition.
The $n \times n$ MIMO codes that achieve these properties were called \emph{Perfect Codes} in \cite{ORBV} and also studied in \cite{EKPKL}.
\medskip \par
In this paper we focus on the \emph{slow block fading} channel, where the fading coefficients are assumed to be constant for a certain number of time blocks $L$.\footnote{This kind of behaviour might be caused by large obstructions between transmitter and receiver. The model is realistic if $L$ is smaller than the coherence time of the channel; for most practical applications, it has been estimated \cite{BB} that the coherence time is greater than $0.01$ seconds, so that $L<100$ is a legitimate assumption.}\\
Even though fading hinders transmission with respect to the AWGN case, fast fading is actually beneficial because the transmission paths at different times can be regarded as independent. On the contrary, with slow fading the ergodicity assumption must be dropped and the diversity of the system is reduced, leading to a performance loss. \\
This loss can be compensated using \emph{coded modulation}: in a general setting, a full-rank space time block code is used as an \emph{inner code} to guarantee full diversity, and is combined with an \emph{outer code} which improves the minimum determinant. \\
We will take as our inner code the Golden Code: we focus on the problem of designing a \emph{block code} $\{\mathbf{X}=(X_1,\ldots,X_L)\}$, where each component $X_i$ is a Golden codeword. \\
In order to increase the minimum determinant, one can consider the ideals of $\gc$. In \cite{HVB}, Hong \emph{et al.} describe a \emph{set partitioning} of the Golden Code, based on a chain of left ideals $\mathcal{G}_k=\gc B^k$, such that 
the minimum determinant in $\gc_k$ is $2^k$ times that of $\gc$.\\ 
Choosing the components $X_i$ independently in $\gc_k$, one obtains a very simple block code. For small sizes of the signal constellation these subcodes already yield a performance gain with respect to the \vv{uncoded} Golden Code (that is, with respect to choosing $X_i \in \gc$ independently). However, the gain is cancelled out asymptotically by the loss of rate as the size of the signal set grows to infinity, since an energy increase is required to mantain the same spectral efficiency, or bit-rate per channel use. \\
A better performance is achieved when the $X_i$ are not chosen in an independent fashion. In \cite{HVB}, two encoders are combined: a trellis encoder whose output belongs to the quotient $\mathcal{G}_k/\mathcal{G}_{k+1}$, and a lattice encoder for $\mathcal{G}_{k+1}$ (\emph{Trellis Coded Modulation}).
\medskip \par
The global minimum determinant for the block code is given by
\begin{equation*}
\Delta_{\min}= \min_{\mathbf{X} \neq 0} \det \left(\sum_{i=1}^L X_i X_i^H \right) 
\end{equation*}  
This expression is difficult to handle because its \vv{mixed terms} are Frobenius norms of products in $\gc$.
The codes described in \cite{HVB} are designed to maximize the approximate parameter $\Delta_{\min}'=\min_{\mathbf{X} \neq 0} \sum_{i=1}^L \det \left(X_i X_i^H \right)$ and so \emph{a priori} they might be suboptimal; we will here consider the mixed terms and so obtain a tighter bound for $\Delta_{\min}$. \\
A rough estimate of the coding gain for the block code comes from its minimum \vv{Hamming distance}, that is, the minimum number of nonzero components. To increase the Hamming weight, we will take as our outer code an error correcting code over the quotient of $\mathcal{G}$ by one of its ideals.\\
The choice of the ideal must follow some basic requirements. First of all, in order to do a \emph{binary} partitioning, we need to  choose ideals whose index is a power of $2$. Moreover, we will choose \emph{two-sided ideals} to ensure that the quotient group is also a ring.\\
We will describe the ideals of $\gc$ that satisfy our requirements; in particular, we consider the quotient rings $\gc/(1+i)\gc$ and $\gc/2\gc$, which turn out to be isomorphic to the rings of $2 \times 2$ matrices over $\F_2$ and $\F_2[i]$ respectively.\\
Unfortunately, little is known about codes over non-commutative rings, and for the time being we have been unable to exploit the ring structure directly for code construction, except in the simple case of the \emph{repetition code} over the cosets of $(1+i)\gc$. Our performance simulations show that this basic construction can lead to up to $2.9 \dB$ of gain with respect to the \vv{uncoded} case. \\
From the additive point of view the quotient $\gc/2\gc$ is indistinguishable from $\F_{256}$, for which a wide variety of error-correcting codes are available. We can combine a shortened Reed-Solomon code with the encoder of the quotient ring to increase the minimum Hamming distance of the code. \\
Simulation results show that using $4$-QAM constellations, that is using only one lattice point per coset, and with codes of length $L=4$ and $L=6$, we obtain a gain of $6.1\dB$ and $7.0 \dB$ with respect to the uncoded Golden Code at the same spectral efficiency.\\
The construction can be extended to the case of $16$-QAM modulation with multiple points per coset, where the gain is somewhat smaller ($3.9 \dB$ for $L=4$), being limited by the minimum distance in the ideal.
\medskip \par
The paper is organized as follows: in Section \ref{golden_code}, we recall the algebraic construction of the Golden Code and its properties. In Section \ref{gbc}, we describe the general setting for Golden block codes and the coding gain estimates; in Section \ref{asoftgc}, we study the \vv{good ideals} of $\mathcal{G}$ for binary partitioning. In Sections \ref{repetition_code} and \ref{grs} we introduce the repetition code and the Reed-Solomon block code over $\gc$ and discuss their performance obtained through simulations. The interested reader can find in the Appendix the main definitions and theorems concerning quaternion algebras that are cited in the paper. 

\section{The Golden Code} \label{golden_code}

Since we are interested in the partitioning of the Golden Code, we begin by recalling its algebraic construction. For the sake of simplicity, definitions and theorem statements are collected in the Appendix. \\
The Golden Code $\gc$, introduced in \cite{BRV}, is optimal for the case of $2$ transmit and $2$ or more receive antennas. 
This code is constructed using the cyclic division algebra $\mathcal{A}=(\Q(i,\theta)/\Q(i),\sigma,\gamma)$ over the number field $\Q(i,\theta)$, where $\theta=\frac{\sqrt{5}+1}{2}$ is the golden number. The set $\mathcal{A}$ is the $\Q(i,\theta)$-vector space $\Q(i,\theta) \oplus \Q(i,\theta) j$, where $j$ is such that $j^2=\gamma \in \Q(i)^*$, $xj=j\bar{x} \; \forall x \in \Q(i,\theta)$.\\ 
Here we denote by $\sigma$ the canonical conjugacy sending an element $x=a+b\theta \in \Q(i,\theta)$ to $\bar{x}=a+b\bar{\theta}$, where $$\bar{\theta}=1-\theta=\frac{1-\sqrt{5}}{2},\quad \theta\bar{\theta}=-1$$
As its degree over its center $\Q(i)$ is $4$, $\mathcal{A}$ is also called a \emph{quaternion algebra}.\\
If we choose $\gamma=i$, $\gamma$ is not a norm in $\Q(i,\theta)/\Q(i)$ \cite{BRV}, and this implies that $\ma$ is a division algebra (see Theorem 
\ref{norm} in the Appendix).\\
From Theorem \ref{splitting_field}, it follows that $\Q(i,\theta)$ is a splitting field for $\ma$, and so $\mathcal{A}$ is isomorphic to a subalgebra of $\mathcal{M}_2(\Q(i,\theta))$. The inclusion is given by
\begin{equation} \label{j}
x \mapsto \left(\begin{array}{cc} x & 0 \\ 0 & \bar{x} \end{array} \right), \; \forall x \in \Q(i,\theta), \quad j \mapsto \left(\begin{array}{cc} 0 & 1 \\ i & 0 \end{array} \right) 
\end{equation}
That is, every element $X \in \mathcal{A}$ admits a matrix representation 
\begin{equation} \label{A_algebra}
X=\left[\begin{array}{cc} x_1 & x_2 \\ i\bar{x}_2 & \bar{x}_1\end{array}\right],\; x_1, x_2 \in \Q(i,\theta) 
\end{equation}
The Golden Code $\mathcal{G}$ is a subring of $\mathcal{A}$ having two additional properties: the minimum determinant 
$$\delta=\min_{X \neq X', \; X,X' \in \gc} \abs{\Det(X-X')}^2 $$
should be strictly bounded away from $0$, and moreover we want the code to be information lossless.\\ 
For the first condition, if we require that the matrix elements of $X$ belong to the ring of integers $\Z[i,\theta]$ of $\Q(i,\theta)$, then $X$ belongs to the \emph{$\Z[i]$-order}
\begin{equation} \label{O}
\mathcal{O}=\left\{\left[\begin{array}{cc} x_1 & x_2 \\ i\bar{x}_2 & \bar{x}_1\end{array}\right] ,\; x_1, x_2 \in \Z[i,\theta] \right\}
\end{equation}
Since $x \in \Z[i,\theta]$ implies that the reduced norm $N(x)=x\bar{x}$ belongs to $\Z[i]$, we have $\Det(X) \in \Z[i]$,  so $\abs{\Det(X)} \geq 1$ for every $X \in \mathcal{O} \setminus \{0\}$.\\
Each codeword of $\mathcal{O}$ carries two symbols $x_1=a+b\theta$, $x_2=c+d\theta$ in $\Z[i,\theta]$, or equivalently four information symbols $(a,b,c,d) \in \Z[i]^4$: the code is \emph{full-rate}.\\
In order to have an information lossless code, a right principal ideal of $\mathcal{O}$ of the form $\alpha \mo$ was used, where $\alpha=1+i \bar{\theta}$: its matrix representation is 
\begin{equation} \label{A}
A=\left[\begin{array}{cc} \alpha & 0 \\ 0 & \ba \end{array}\right] \in \mathcal{O} 
\end{equation}
The \emph{Golden Code} is defined as $\mathcal{G}=\frac{1}{\sqrt{5}}\alpha \mathcal{O}$. Every codeword in $\gc$ is of the form $X=\frac{1}{\sqrt{5}}AW$, with $W \in \mathcal{O}$:
\begin{equation} \label{golden_codeword}
X=\frac{1}{\sqrt{5}}\left[\begin{array}{cc} \alpha(a +b \theta) & \alpha(c+d \theta) \\ \ba i ( c+d\bt)  & \ba(a +b\bt) \end{array}\right]
\end{equation}
\begin{rem} \label{det_O}
We have seen that $\forall W \in \mathcal{O} \setminus \{0\}$, $\abs{\Det(W)} \geq 1$. Consequently, $\forall X \in \gc \setminus \{0\}$, $\abs{\Det(X)}^2 \geq \delta= \frac{1}{5}$. 
\end{rem}
In fact, if $X=\frac{A}{\sqrt{5}}W$, $\abs{\Det(X)}=\frac{\abs{N(\alpha)}}{5} \abs{\Det(W)}=\abs{\frac{\Det{(W)}}{\sqrt{5}}}$, since $\abs{N(\alpha)}=\abs{2+i}=\sqrt{5}$. \\
The code $\gc$ has \emph{cubic shaping}: it is isometric to the cubic lattice $\Z[i]^4$ (and so it is information lossless). In fact, if we consider the linear mapping $\phi: \mathcal{A} \to \C^4$ that vectorizes matrices
\begin{displaymath}
\phi\left(\left[\begin{array}{cc} a & c \\ b & d \end{array}\right]\right)=(a,b,c,d) \in \C^4,
\end{displaymath} 
then $\phi(\gc)=R\Z[i]^4$, where $R$ is the unitary matrix
\begin{equation}
R=\frac{1}{\sqrt{5}}\left[\begin{array}{rrrr}
\alpha & -\ba i & 0 & 0 \\
0 & 0 & \ba i & \alpha \\
0 & 0 & \alpha & -\ba i \\
\ba & -\alpha i & 0 & 0
\end{array}\right]
\label{R_vectorized}
\end{equation}
Even though $\gc$ is defined as a right ideal, it is easy to see that actually it is a \emph{two-sided ideal}: if $w=w_1+w_2 j \in \mo$, $w_1,w_2 \in \Z[i,\theta]$, 
\begin{displaymath}
\alpha (w_1+ w_2 j)= w_1 \alpha+ w_2 j \ba= (w_1+i\theta w_2j)\alpha,
\end{displaymath}
observing that $\alpha i\theta= i \theta + 1 = \ba$. But 
\begin{equation} \label{two-sided2}
\xi: w_1+w_2 j \mapsto w_1+i\theta w_2 j
\end{equation}
is an homomorphism of $\Z[i]$-modules that maps $\mo$ into itself bijectively, therefore $\alpha \mo=\mo \alpha$.\\
Finally, $\sqrt{5}\gc$ is an \emph{integral ideal} because it is contained in $\mo$. 

\begin{rem}
For the sake of simplicity, in this section we have described the Golden Code as an infinite code. However in a practical transmission scheme, one considers a finite subset of $\gc$, by choosing the information symbols $a,b,c,d$ in a QAM constellation carved from $\Z[i]$.  
\end{rem}

\section{Golden Block Codes} \label{gbc}
We now focus on the case of a \emph{slow block fading} channel, meaning that the channel coefficients remain constant during the transmission of $L$ codewords. The transmitted signal $\mathbf{X}=(X_1,\ldots,X_L)$ will be a vector of Golden codewords in a block code $\mathcal{S} \subset \gc^L$.
The received signal is given by
\begin{equation} %\label{channel}
\mathbf{Y}=H\mathbf{X}+\mathbf{W}, \qquad \mathbf{X}, \mathbf{Y}, \mathbf{W} \in \C^{2 \times 2L},
\end{equation}
where the entries of $H \in \C^{2 \times 2}$ are i.i.d. complex Gaussian random variables with zero mean and variance per real dimension equal to $\frac{1}{2}$, and $\mathbf{W}$ is the complex Gaussian noise with i.i.d. entries of zero mean and variance $N_0$. We consider the coherent case, where the channel matrix $H$ is known at the receiver.\\
The pairwise error probability is bounded by \cite{TSC}
\begin{equation} \label{PEP}
P(\mathbf{X}\mapsto \mathbf{X}') \leq \frac{1}{\left(\sqrt{\Delta_{\min}} \frac{E_{\ms}}{N_0}\right)^4},
\end{equation}
In the above formula, $E_{\ms}$ is the average energy per symbol of $\ms$ and 
\begin{displaymath}
\Delta_{\min}=\min_{\mathbf{X} \in \ms \setminus \{0\}}\; \abs{\Det(\mathbf{X}\mathbf{X}^H)}
\end{displaymath}
In order to minimize the PEP for a given SNR, we should maximize $\Delta_{\min}$. We will show that 
$$\abs{\Det(\mathbf{X}\mathbf{X}^H)}\geq (w_H(\mathbf{X}))^2 \delta,$$
where $w_H(\mathbf{X})$ is the number of nonzero codewords in $(X_1,\ldots,X_L)$ (a sort of \vv{Hamming weight}), and $\delta=\frac{1}{5}$ is the minimum square determinant of the Golden Code.\\ 
Because of the lack of diversity of the channel in the slow fading case, if we simply choose $X_1,\ldots, X_L$ independently in the Golden Code, the code performance will be poor compared to the fast block fading model. We call this scheme the \vv{uncoded Golden Code}: in this case $\Delta_{\min}=\delta$, for any length $L$. \\ 
%We want to define a block code of length $L$ using the Golden Code as the \vv{alphabet}, in order to improve its performance in this setting.\\
To compare the error probability of a block codes with that of the uncoded Golden Code of equal length $L$ with the same data rate, we can employ the \emph{asymptotic coding gain} defined in \cite{HVB}:
\begin{equation} \label{acg}
\gamma_{\as}=\frac{\sqrt{\Delta_{\min}}/E_{\mathcal{S}}}{\sqrt{\Delta_{\min,U}}/E_{\mathcal{S},U}},
\end{equation}
where $\Delta_{\min},\Delta_{\min,U}$ and $E_{\mathcal{S}}, E_{\mathcal{S},U}$ are the minimum determinants and average constellation energies of the block code and the uncoded case respectively.\\
In all the cases that we considered, the theoretical gain $\gamma_{\as}$ turned out to be smaller than the actual gain evidenced by computer simulations. This is not surprising, since $\gamma_{\as}$ is only a comparison of the dominant terms in the pairwise error probability.

\subsection{Estimates of the Frobenius norm} \label{Belfiore}
First of all, we give a more explicit expression for $\Det(\mathbf{X}\mathbf{X}^H)$. \\
We define the quaternionic conjugacy in the algebra $\mathcal{A}$: 
\begin{displaymath}
X=\left[\begin{array}{cc} x_1 & x_2 \\ i \bar{x}_2 & \bar{x}_1\end{array}\right] \quad \mapsto \quad \widetilde{X}=\left[\begin{array}{rr} \bar{x}_1 & -x_2 \\ -i \bar{x}_2 & x_1\end{array}\right]
\end{displaymath}
Observe that $\forall X \in \mathcal{A}$,
\begin{align} 
& \widetilde{X} X = \det(X) \mathds{1} \label{prodotto}\\
& \widetilde{X}+ X= (x_1 + \bar{x}_1) \mathds{1}= \tr(X) \mathds{1} \label{somma}\\
& \det(X) = \det({\widetilde{X}}) \label{determinante}
\end{align}
where $\mathds{1}$ denotes the identity matrix.\\
Recall that the \emph{Frobenius norm} of a matrix $M=(m_{i,j})$ is $$\norm{M}_F=\sqrt{\sum_{i,j} \abs{m_{i,j}}^2}$$
Then the following formula holds:
\begin{lem} \label{formula_Belfiore}
$\forall \mathbf{X}=(X_1,\ldots,X_L) \in \mathcal{A}^L$, 
\begin{multline} 
\label{fb}
\Det(\mathbf{X} \mathbf{X}^H)=\det\left(\sum_{i=1}^L X_iX_i^H\right)=\\
=\abs{\Det(X_1)}^2+\ldots+\abs{\Det(X_L)}^2+\sum_{j>i} \norm{\widetilde{X}_j X_i}_F^2
\end{multline}
\end{lem}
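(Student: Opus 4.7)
The plan is to apply the adjugate identity (6), namely $\widetilde{M}M=\det(M)\mathds{1}$, to the $2\times 2$ matrix $M=\sum_{i=1}^L X_i X_i^H$, and read off the determinant from the trace. Observation (7) rewrites $\widetilde{X}=\tr(X)\mathds{1}-X$, which shows that $\widetilde{\cdot}$ is $\mathbb{C}$-linear on $\mathcal{M}_2(\mathbb{C})$; and a direct check (or the standard adjugate identity) gives $\widetilde{AB}=\widetilde{B}\widetilde{A}$ for $2\times 2$ matrices. Combining, $\widetilde{M}=\sum_i \widetilde{X_i^H}\,\widetilde{X_i}$, hence
\begin{equation*}
2\det(M)=\tr\bigl(\widetilde{M}M\bigr)=\sum_{i,j=1}^L\tr\bigl(\widetilde{X_i^H}\,\widetilde{X_i}\,X_j X_j^H\bigr).
\end{equation*}

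The diagonal terms $i=j$ simplify via two applications of (6): $\widetilde{X_i}X_i=\det(X_i)\mathds{1}$ and $\widetilde{X_i^H}X_i^H=\overline{\det(X_i)}\mathds{1}$ collapse the product to $|\det(X_i)|^2\mathds{1}$, each contributing $2|\det(X_i)|^2$ to the trace. For $i\neq j$, a direct entrywise check shows $(\widetilde{X_i})^{H}=\widetilde{X_i^H}$, so by cyclicity of the trace
\begin{equation*}
\tr\bigl(\widetilde{X_i^H}\widetilde{X_i}X_j X_j^H\bigr)=\tr\bigl((\widetilde{X_i}X_j)(\widetilde{X_i}X_j)^{H}\bigr)=\|\widetilde{X_i}X_j\|_F^2.
\end{equation*}

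To match the stated formula, the last step is to consolidate the two ordered contributions $(i,j)$ and $(j,i)$ into a single unordered term, i.e.\ to prove the symmetry $\|\widetilde{X_i}X_j\|_F=\|\widetilde{X_j}X_i\|_F$ for $X_i,X_j\in\mathcal{A}$. The key is that $\mathcal{A}$ is stable under $\widetilde{\cdot}$: if $Z$ has the shape required of elements of $\mathcal{A}$, one reads off directly that $\widetilde{Z}$ does too. Hence $\widetilde{X_i}X_j\in\mathcal{A}$, and its adjugate is $\widetilde{X_j}\,\widetilde{\widetilde{X_i}}=\widetilde{X_j}X_i$. Finally, an entrywise computation gives $\|\widetilde{Z}\|_F=\|Z\|_F$ for every $Z\in\mathcal{A}$, which delivers the symmetry. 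Combining everything and dividing by $2$ yields the claimed identity.

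The step I expect to be the main obstacle is precisely this last symmetry: it is the only place where the specific quaternionic structure of $\mathcal{A}$ is used, rather than properties shared by all $2\times 2$ matrices. Without it the cross sum would remain as $\sum_{i\neq j}\|\widetilde{X_i}X_j\|_F^2$ instead of collapsing to $2\sum_{j>i}\|\widetilde{X_j}X_i\|_F^2$, and the factor of $2$ would not match.
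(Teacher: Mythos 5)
Your proof is correct and takes essentially the same route as the paper's: both expand $\widetilde{M}M=\det(M)\mathds{1}$ for $M=\sum_i X_iX_i^H$ using linearity and antimultiplicativity of the $2\times 2$ adjugate, and identify the cross terms as Frobenius norms via $\tr(AA^H)=\norm{A}_F^2$. The only (minor) difference is that the paper symmetrizes at the matrix level, writing $\widetilde{Q}_iQ_j+\widetilde{Q}_jQ_i=\tr(\widetilde{Q}_jQ_i)\mathds{1}$, which sidesteps your final symmetry step $\norm{\widetilde{X}_iX_j}_F=\norm{\widetilde{X}_jX_i}_F$ --- a step that in fact needs no quaternionic structure, since the adjugate of any $2\times 2$ matrix permutes its entries up to sign and hence preserves the Frobenius norm.
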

The proof can be found in Appendix \ref{proofs}.\\
We also state some simple properties of the quaternionic conjugate and of the Frobenius norm that will be useful in the sequel:
\begin{rem} \label{summary}
\begin{enumerate}
\item[a)] If $W \in \mathcal{O}$, $\norm{W}_F^2 \in \Z$. 
\item[b)] Let $X, Y$ be two $2 \times 2$ complex-valued matrices. Then
\begin{equation}
\begin{split} \label{det}
&\fn{X} \geq 2\abs{\Det(X)}, \\
&\fn{\widetilde{X}Y}\geq 2\abs{\Det(X)}\abs{\Det(Y)}
\end{split} 
\end{equation}
In particular $\forall W \in \mathcal{O}\setminus\{0\}$,
\begin{equation} 
\fn{W}\geq 2\abs{\Det(W)} \geq 2 \label{estimate_O}
\end{equation}
%Moreover, the minimum is actually achieved when $W_1=W_2=\mathbbm{1}$.
\item[c)] If $X_1,X_2 \in \mathcal{G} \setminus \{0\}$, 
\begin{equation}
\norm{\widetilde{X}_2X_1}_F^2 \geq \frac{2}{5}=2\delta \label{product_norm}
\end{equation} 
\end{enumerate}
\end{rem}

From equation (\ref{det}), it follows that the determinant is bounded from below by the squared Hamming weight:
\begin{lem} \label{delta_d}
Let $\mathbf{X}=(X_1,\ldots,X_L) \in \gc^L$. Then
\begin{displaymath}
\Det(\mathbf{X} \mathbf{X}^H)\geq \left(\sum_{i=1}^L \abs{\Det(X_i)} \right)^2 \geq (w_H(\mathbf{X}))^2\delta,
\end{displaymath}
where $w_H(\mathbf{X})=\# \{i \in \{1,\ldots,L\} \,|\, X_i \neq 0\}$ is the Hamming weight of the block $\mathbf{X}$.
\end{lem}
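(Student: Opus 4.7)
The plan is to combine Lemma \ref{formula_Belfiore} with the Frobenius-norm bound in Remark \ref{summary} b) and recognize that what remains is a completed square.

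First I would start from the identity
\[
\Det(\mathbf{X}\mathbf{X}^H)=\sum_{i=1}^L \abs{\Det(X_i)}^2+\sum_{j>i}\norm{\widetilde{X}_jX_i}_F^2
\]
given by Lemma \ref{formula_Belfiore}, and then bound every mixed term using $\norm{\widetilde{X}_jX_i}_F^2\geq 2\abs{\Det(X_j)}\abs{\Det(X_i)}$ from Remark \ref{summary} b). Setting $d_i=\abs{\Det(X_i)}\geq 0$, this yields
\[
\Det(\mathbf{X}\mathbf{X}^H)\geq \sum_{i=1}^L d_i^2+2\sum_{j>i}d_jd_i=\Bigl(\sum_{i=1}^L d_i\Bigr)^2,
\]
which is exactly the first inequality in the statement.

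For the second inequality, I would use Remark \ref{det_O}, which gives $\abs{\Det(X_i)}\geq \sqrt{\delta}$ for every nonzero $X_i\in\gc$. Hence the sum $\sum_i d_i$ is at least $w_H(\mathbf{X})\sqrt{\delta}$, since only the $w_H(\mathbf{X})$ nonzero codewords contribute and each contributes at least $\sqrt{\delta}$; squaring produces the required $(w_H(\mathbf{X}))^2\delta$.

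There is no real obstacle here: the argument is a direct chaining of the formula in Lemma \ref{formula_Belfiore}, the Frobenius/determinant inequality of Remark \ref{summary} b) (which supplies precisely the cross terms needed to turn $\sum d_i^2$ into $(\sum d_i)^2$), and the non-vanishing determinant bound $\delta=1/5$ for $\gc$. The only subtle point worth emphasizing is that the coefficient $2$ in the Frobenius bound matches the binomial coefficient $2$ in the expansion of $(\sum d_i)^2$, so the completion of the square is tight in form, not just in order of magnitude.
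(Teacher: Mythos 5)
Your proof is correct and is exactly the argument the paper intends: it chains the determinant expansion of Lemma \ref{formula_Belfiore} with the bound $\norm{\widetilde{X}_jX_i}_F^2\geq 2\abs{\Det(X_j)}\abs{\Det(X_i)}$ from Remark \ref{summary}~b) to complete the square, then applies the non-vanishing determinant bound $\abs{\Det(X_i)}\geq\sqrt{\delta}$ for nonzero Golden codewords. No gaps.
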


\section{Two-sided ideals of $\gc$} \label{asoftgc}
The choice of a good block code of length $L$ will be based on a partition chain of ideals of the Golden Code. We would like to obtain a binary partition, which is simpler to use for coding and fully compatible with the choice of a QAM constellation: we must then use ideals whose index is a power of $2$, that is, whose norm is a power of $1+i$.\\
A similar construction appears in \cite{HVB} and employs one-sided ideals. However, in order to have good estimates of the coding gain, because of the mixed terms in the minimum determinant formula (\ref{fb}), we need to take the ring structure into account: we will choose \emph{two-sided ideals} to ensure that the ideals are invariant with respect to the quaternionic conjugacy and multiplication on both sides, and that the quotient group is also a ring.
\medskip \par
In this section we describe the structure of the two-sided ideals of $\gc$ whose norm is a power of $1+i$. Unfortunately, we will see that the only two-sided ideals with this property are the trivial ones. We then study the corresponding quotient rings, which are rings of matrices over non-integral rings. \\
For these constructions we will need some notions from non-commutative algebra (see Appendix \ref{ivmo}), relating the existence of two-sided ideals to the ramification of primes over the base field. We will also show that $\mo$ is a maximal order of $\ma$.

As we have seen in Section \ref{golden_code}, $\mo=\Z[i,\theta] \oplus \Z[i,\theta] j$ is a $\Z[i]$-order of $\ma$, and $\og=\sqrt{5}\gc=\alpha \mo$ is a two-sided principal ideal of $\mo$.\\
$\sqrt{5}\gc$ is also a prime ideal since $\sqrt{5}\gc \cap \Z[i]=(2+i)$ is a prime ideal of $\Z[i]$ (see Theorem \ref{1-1} in the Appendix).\\
Observe that the prime ideals $(2+i)$ and $(2-i)$ of $\Z[i]$ are both ramified in $\ma$: in fact 
\begin{displaymath}
(2+i)=(\alpha)^2, \text{ and } (2-i)=(\alpha')^2, \text{ where } \alpha'=1-i\bt
\end{displaymath} 
(Remark that $\alpha=i\theta \bar{\alpha}$, $\alpha'=-i\bt \bar{\alpha'}$).
 
\begin{prop}
$\mo$ is a maximal order.
\end{prop}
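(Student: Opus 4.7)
The plan is to prove maximality via the discriminant: show that $\operatorname{disc}(\mo)$ already equals the smallest discriminant allowed by the ramification data of $\ma$, so $\mo$ cannot sit strictly inside a larger order. Concretely, using the $\Z[i]$-basis $\{1,\theta,j,\theta j\}$ and the embedding (\ref{j}), for which the reduced trace of $X$ is $x_1+\bar{x}_1$, and exploiting the relations $\theta+\bt=1$, $\theta\bt=-1$, $j^2=i$, $j\theta=\bt j$, all the cross terms $\tr(e_ie_j)$ with exactly one factor in $\{j,\theta j\}$ vanish, so the Gram matrix of the reduced trace form is block-diagonal. A short computation yields
\[
T=\bigl(\tr(e_ie_j)\bigr)=\begin{pmatrix} 2 & 1 & 0 & 0 \\ 1 & 3 & 0 & 0 \\ 0 & 0 & 2i & i \\ 0 & 0 & i & -2i \end{pmatrix},
\]
with $\det T=5\cdot 5=25$, hence $\operatorname{disc}(\mo)=(25)=(2+i)^2(2-i)^2$ as an ideal of $\Z[i]$.

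Next, I would invoke the standard fact (from the appendix on quaternion algebras) that for any maximal $\Z[i]$-order $\mo_{\max}$ of $\ma$, $\operatorname{disc}(\mo_{\max})=\mathfrak{d}(\ma)^2$, where $\mathfrak{d}(\ma)$ is the squarefree product of the finite primes of $\Z[i]$ ramifying in $\ma$. The paragraph preceding the proposition already exhibits $(2+i)=(\alpha)^2$ and $(2-i)=(\alpha')^2$ as ramified, so $(5)\mid\mathfrak{d}(\ma)$ and hence $(25)\mid\operatorname{disc}(\mo_{\max})$. Choosing any maximal order $\mo_{\max}\supseteq\mo$ and applying the index relation $\operatorname{disc}(\mo)=[\mo_{\max}:\mo]^2\operatorname{disc}(\mo_{\max})$, the two divisibilities combine to force $\operatorname{disc}(\mo_{\max})=(25)=\operatorname{disc}(\mo)$ and $[\mo_{\max}:\mo]=1$, i.e.\ $\mo=\mo_{\max}$. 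As a byproduct, $(2+i)$ and $(2-i)$ are the only finite primes of $\Z[i]$ ramified in $\ma$.

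The main delicate point is the computation of the lower $2\times 2$ block of $T$, where the twisting rule $j\theta=\bt j$ together with $\theta\bt=-1$ must be used carefully to evaluate $\tr(j\cdot\theta j)=\tr(\bt i)=i$ and $\tr((\theta j)^2)=\tr(-i)=-2i$; once the block-diagonal structure and the value $\det T=25$ are in hand, the maximality conclusion follows automatically from the ramification information already developed in the text.
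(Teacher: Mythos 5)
Your proposal is correct and follows essentially the same route as the paper: compute the reduced discriminant of $\mo$ from the trace form on the basis $\{1,\theta,j,\theta j\}$, obtain $\det T=25$, and compare with the ramification of $(2+i)$ and $(2-i)$ to force maximality. The only cosmetic differences are that you close the argument with the index relation $\operatorname{disc}(\mo)=[\mo_{\max}:\mo]^2\operatorname{disc}(\mo_{\max})$ where the paper instead uses strict monotonicity of the discriminant under strict inclusion of orders, and that you should, as the paper does, note explicitly that the infinite places of $\Q(i)$ are complex and hence unramified before invoking the discriminant criterion for maximality.
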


\begin{proof} 
$\ma$ is a quaternion algebra unramified at infinity: the infinite primes are complex (because the base field $\Q(i)$ is imaginary quadratic) and they can't be ramified. Then one can check that $\mo$ is maximal through the computation of its reduced discriminant $d(\mathcal{O})$ (see Proposition \ref{discriminant} in the Appendix).\\
$d(\mathcal{O})$ is equal to $\sqrt{\abs{\Det(\tr(w_k w_l))}}\Z[i]$, where $\{w_1=1$, $w_2=\theta$,$w_3=j$, $w_4=\theta j\}$ is the basis of $\mo$ over $\Z[i]$:  

\begin{align*}
&(w_k w_l)_{1 \leq k,l \leq 4}=\left(\begin{array}{rrrr}
1 & \theta & j & \theta j \\
\theta & \theta^2 & \theta j & \theta^2 j \\
j & \bt j & i & i\bt \\
\theta j & -j & \theta i & -i 
\end{array}\right), \\
&\Det(\tr(w_i w_j))= \Det\left( \begin{array}{rrrr}
2 & 1 & 0 & 0 \\
1 & 3 & 0 & 0 \\
0 & 0 & 2i & i \\
0 & 0 & i & -2i 
\end{array} \right)=25
\end{align*}

Then $d(\mo)=5\Z[i]$. If $\mo$ were strictly contained in a maximal order $\mo'$, $d(\mo')$ would be strictly larger than $5\Z[i]$. But we know from Proposition \ref{discriminant} that $d(\mo')$ is the product of all ramified primes of $\ma$; in particular it should be contained in the ideals $(2+i)$ and $(2-i)$. But then it would be contained in $5\Z[i]$, a contradiction. Then $\mo$ is a maximal order, and $\gc$ is a normal ideal.
\end{proof}

Since $\mo$ is maximal, from Proposition \ref{discriminant} we also learn that $(2+i)$ and $(2-i)$ are the only ramified primes in $\ma$. \\
Then Theorem \ref{two-sided} implies that the prime two-sided ideals of $\mo$ are either of the form $p \mo$, where $p$ is prime in $\Z[i]$, or belong to $\{\alpha \mo, \alpha' \mo\}$. \\
It follows that the only two-sided ideals of $\gc$ whose norm is a power of $1+i$ are the trivial ideals of the form $(1+i)^k\gc$. 

\subsection{The quotient ring $\og/(1+i) \og$}
In the sequel, we will denote by $\og$ the integral ideal $\sqrt{5}\gc$.\\
Consider the prime ideal $(1+i) \mo$. $\og$ and $(1+i)\mo$ are \emph{coprime} ideals, that is $\og+(1+i)\mo=\mo$; as a consequence, $\og \cap (1+i)\mo=\og(1+i)\mo=(1+i)\og$. Recall the following basic result:

\begin{theo}[third isomorphism theorem for rings]
Let $I$ and $J$ be ideals in a ring $R$. Then $\frac{I}{I \cap J} \cong \frac{I+J}{J}$.
\end{theo}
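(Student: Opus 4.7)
The plan is to deduce the isomorphism from the first isomorphism theorem applied to a natural map from $I$ into $(I+J)/J$. First I would observe that $I+J$ is an ideal of $R$ containing $J$, so the quotient $(I+J)/J$ makes sense as a ring. Then I would define $\phi \colon I \to (I+J)/J$ by $\phi(x) = x + J$ and check that it is a ring homomorphism: additivity is trivial, and multiplicativity follows from $\phi(xy) = xy + J = (x+J)(y+J)$, viewing $I$ as a (possibly non-unital) subring of $R$.

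Next I would establish surjectivity. Every element of $(I+J)/J$ has the form $(x+y)+J$ with $x \in I$ and $y \in J$; since $y \in J$, this coset equals $x+J = \phi(x)$, so $\phi$ is onto. For the kernel, $\phi(x) = 0$ in $(I+J)/J$ iff $x \in J$; combined with $x \in I$ this is precisely the condition $x \in I \cap J$, hence $\ker \phi = I \cap J$.

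The first isomorphism theorem then delivers $I/(I \cap J) \cong (I+J)/J$. There is no real obstacle here: this is a textbook result, and the only mild subtlety is that $I$ need not contain the identity of $R$, so one must invoke the first isomorphism theorem in its non-unital form, where the statement still holds verbatim for rngs. In the intended application we will have $R = \mo$, $I = \og$, $J = (1+i)\mo$, together with the coprimality $\og + (1+i)\mo = \mo$ noted just above, so the theorem will immediately yield $\og/(1+i)\og \cong \mo/(1+i)\mo$, which is exactly the reduction that the next section needs in order to describe the quotient explicitly as a matrix ring over $\F_2$.
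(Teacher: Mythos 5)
Your proof is correct and is the standard first-isomorphism-theorem argument; the paper itself merely recalls this statement as a basic result and offers no proof, so there is nothing to diverge from. Your handling of the non-unital subtlety (treating $I$ as a rng and invoking the first isomorphism theorem in that form) is sound, and your identification of the intended application with $R=\mo$, $I=\og$, $J=(1+i)\mo$ matches exactly how the paper uses the result.
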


If $I=\og$ and $J=(1+i)\mo$, we get 
\begin{equation} \label{isomorphism}
\frac{\og}{(1+i) \og}\cong\frac{\mo}{(1+i)\mo} 
\end{equation} 

If $\pi_{\og}:\og \to \og/(1+i) \og$ and $\pi_{\mo}: \mo \to \mo/(1+i)\mo$ are the canonical projections on the quotient, the ring isomorphism in (\ref{isomorphism}) is simply given by $\pi_{\og}(g) \mapsto \pi_{\mo}(g)$.\\  
Theorem \ref{1-1} implies that $\mo/(1+i)\mo$ is a simple algebra over $\Z[i]/(1+i)\cong \F_2$. 
We denote the image of $x \in \mo$ through $\pi_{\mo}$ with $[x]$. 

\begin{lem} \label{1+i}
$\mo/(1+i)\mo$ is isomorphic to the ring $\mathcal{M}_2(\F_2)$ of $2 \times 2$ matrices over $\F_2$. 
\end{lem}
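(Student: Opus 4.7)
The plan is to reduce to a classification statement for finite simple algebras and then to verify a one-line non-commutativity check. Since $\mo$ is a free $\Z[i]$-module of rank $4$ with basis $\{1,\theta,j,\theta j\}$, the quotient $\mo/(1+i)\mo$ is additively isomorphic to $(\Z[i]/(1+i))^4\cong\F_2^4$, so it is a $4$-dimensional $\F_2$-algebra containing exactly $16$ elements, and the sentence immediately preceding the lemma already guarantees simplicity via Theorem~\ref{1-1}.

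I would then invoke the Artin--Wedderburn theorem together with Wedderburn's little theorem on finite division rings: a finite simple $\F_2$-algebra of dimension $4$ must be of the form $\mathcal{M}_n(\F_{2^r})$ with $n^2 r = 4$, which leaves only $\F_{16}$ and $\mathcal{M}_2(\F_2)$. The former is commutative, the latter is not, so it suffices to exhibit a single non-commuting pair of elements in $\mo/(1+i)\mo$.

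The natural candidates are $[\theta]$ and $[j]$. Using the defining relation $j\theta=\bt j$ of $\ma$, one computes in $\mo$
\begin{displaymath}
\theta j-j\theta=(\theta-\bt)j=(2\theta-1)j.
\end{displaymath}
Since $2=(1+i)(1-i)\in(1+i)\Z[i]$, this reduces modulo $(1+i)\mo$ to $-[j]=[j]$, and on the $\Z[i]$-basis $\{1,\theta,j,\theta j\}$ of $\mo$ the element $j$ has coordinate $1\notin(1+i)\Z[i]$, so $[j]\neq 0$ and the quotient is non-commutative. This rules out $\F_{16}$ and forces the claimed isomorphism.

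The main obstacle is essentially bookkeeping rather than insight: one has to confirm that $(1+i)\mo$ is two-sided (immediate since $1+i$ lies in the center $\Z[i]$ of $\mo$), pin down the additive dimension, and verify that $[j]\neq 0$ in the quotient. A more constructive alternative would be to send $\theta$ and $j$ to concrete matrices $T,J\in\mathcal{M}_2(\F_2)$ satisfying $T^2+T+I=0$, $J^2=I$ and $JT+TJ=J$ (the reductions mod $(1+i)$ of $\theta^2=\theta+1$, $j^2=i$, and $j\theta=\bt j$, using $i\equiv 1$), and then conclude by injectivity of a nonzero ring map out of a simple algebra combined with a cardinality count; but the Wedderburn route needs no explicit matrices and seems cleaner.
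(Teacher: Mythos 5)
Your proof is correct, but it takes a genuinely different route from the paper's. The paper proves the lemma constructively: it invokes Lemma~\ref{free} to get the $\F_2$-basis $\{[1],[\theta],[j],[\theta j]\}$ of the quotient, writes down the explicit map $\psi$ sending these to concrete matrices, and checks injectivity and multiplicativity on the basis. You instead combine the simplicity statement (which the paper does assert just before the lemma, via Theorem~\ref{1-1} -- note this silently uses the earlier facts that $\mo$ is maximal and that $(1+i)\mo$ is prime) with Artin--Wedderburn and Wedderburn's little theorem to reduce to the dichotomy $\F_{16}$ versus $\mathcal{M}_2(\F_2)$, and then settle it with the commutator $[\theta][j]-[j][\theta]=[(2\theta-1)j]=[j]\neq 0$, which is computed correctly from $xj=j\bar{x}$ and $2\in(1+i)\Z[i]$. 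Your argument is shorter and avoids verifying multiplicativity of an explicit map, but it is non-constructive and leans on heavier classification theorems plus the simplicity of the quotient, which the explicit proof does not need at all; moreover, the paper's concrete matrices are not a luxury -- the images $\e_1,\ldots,\e_4$ of the ideal basis in~(\ref{basis}) are used later to build the encoder and to identify the invertible cosets, so an abstract existence proof would leave that work still to be done. The constructive alternative you sketch at the end (matrices $T,J$ with $T^2+T+I=0$, $J^2=I$, $JT+TJ=J$) is essentially the paper's proof with the paper's own choice $T=\psi([\theta])$, $J=\psi([j])$.
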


\begin{proof}
We use the well-known lemma \cite{Hu}:
\begin{lem} \label{free}
Let $R$ be a ring with identity, $I$ a proper ideal of $R$, $M$ a free $R$-module with basis $X$ and $\pi: M \to M/IM$ the canonical projection. Then $M/IM$ is a free $R/I$-module with basis $\pi(X)$ and $\abs{\pi(X)}=\abs{X}$.
\end{lem}
We know that $\mo/(1+i)\mo$ is a $\Z[i]$-module; the lemma implies that it is also a free $\Z[i]/(1+i)$-module, that is a vector space over $\F_2$, whose basis is $\{[1],[\theta],[j],[\theta j]\}$.\\
We define an homomorphism of $\F_2$-vector spaces $\psi: \protect{\mo/(1+i)\mo} \to \mathcal{M}_2(\F_2)$ by specifying the images of the basis:
\begin{align*} 
&\psi([1])=\mathds{1}, \quad \psi([\theta])=\begin{pmatrix} 0 & 1 \\ 1 & 1 \end{pmatrix}, \\
&\psi([j])=\begin{pmatrix} 0 & 1 \\ 1 & 0 \end{pmatrix}, \quad \psi([\theta j])=\psi([\theta]) \psi([j]) 
\end{align*} 
It is one-to-one since $\psi([1]), \psi([\theta]), \psi([j]), \psi([\theta j])$ are linearly independent. To prove that $\psi$ is also a ring homomorphism, it is sufficient to verify that $\psi(w_i w_j)=\psi(w_i) \psi(w_j)$ for all pairs of basis vectors $w_i, w_j$. 
%We leave this computation to the reader.
\end{proof}

Recall that as a $\Z[i]$-lattice, $\og$ is isometric to $\sqrt{5}\Z[i]^4$, and a canonical basis is given by $\{\alpha,\alpha\theta, \alpha j,  \alpha \theta j\}$.
The corresponding elements $\psi([\alpha]),\psi([\alpha\theta]),\psi([\alpha j]),\psi[\alpha \theta j])$ of $\mathcal{M}_2(\F_2)$ are
\begin{equation}
\begin{split} 
\e_1=\begin{pmatrix} 0 & 1 \\ 1 & 1 \end{pmatrix}, \quad  \e_2=\begin{pmatrix} 1 & 1 \\ 1 & 0 \end{pmatrix},  \\ 
\e_3=\begin{pmatrix} 1 & 0 \\ 1 & 1 \end{pmatrix}, \quad  \e_4=\begin{pmatrix} 1 & 1 \\ 0 & 1 \end{pmatrix}.  \label{basis}
\end{split}
\end{equation}
It is easy to check that the only invertible elements in $\mathcal{M}_2(\F_2)$ are
\begin{displaymath}
\e_1,\;\e_2,\;\e_3,\;\e_4,\;\e_1+\e_2=\mathds{1},\;\e_3+\e_4=\varphi(j)
\end{displaymath}

Observe that the lifts to $\gc$ of non-invertible elements have a higher determinant:
\begin{rem} \label{non-invertible}
If $M \in \mathcal{M}_2(\F_2) \setminus \{0\}$ is non-invertible, $$\min_{X \in \gc,\; \pi_{\og}(\sqrt{5}X)=M} \abs{\det(X)}^2 \geq 2\delta$$ 
\end{rem}
%riprendere qui
\begin{proof}
$\pi_{\og}(X)$ is non-invertible in $\og/(1+i)\og$ if and only if its determinant is non-invertible in $\Z[i]/(1+i)$, that is, $\det(X)=\widetilde{X} X \in (1+i)\setminus \{0\}$. (If $M \neq 0$, $\det(X) \neq 0$, since $\ma$ is a division ring.) \\
Then $\abs{\det(\widetilde{X} X)}= \abs{\det(X)}^2 \geq 2\delta$.  
\end{proof}

\subsection{The quotient ring $\og/ 2\og$} \label{quotient2}
Again, $\og$ and $2 \mo$ are coprime and so $\og+2\mo=\mo$, $\og\cap \mo=2\og$; from the third isomorphism theorem for rings, $\frac{\og}{2\og} \cong \frac{\mo}{2\mo}$.  

\begin{lem} \label{quotient_lemma}
$\mo/2\mo$ is isomorphic to the ring $\mathcal{M}_2(\F_2[i])$ of $2 \times 2$ matrices over the ring $\F_2[i]$. 
\end{lem}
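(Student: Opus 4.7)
The plan is to follow the structure of the proof of Lemma~\ref{1+i}. The third isomorphism theorem has already reduced the claim to $\mo/2\mo \cong \mathcal{M}_2(\F_2[i])$, so I would first apply Lemma~\ref{free} with $R = \Z[i]$ and $I = 2\Z[i]$ to see that $\mo/2\mo$ is a free module over $\Z[i]/2\Z[i] \cong \F_2[i]$ with basis $\{[1],[\theta],[j],[\theta j]\}$. Both source and target have cardinality $4^4 = 256$, so it suffices to construct an injective ring homomorphism $\psi: \mo/2\mo \to \mathcal{M}_2(\F_2[i])$; I would define it as the $\F_2[i]$-linear extension of $\psi([1])=\mathds{1}$, $\psi([\theta])=T$, $\psi([j])=J$, $\psi([\theta j])=TJ$ for suitable $T,J$.

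For $\psi$ to respect multiplication, $T$ and $J$ must satisfy the defining relations of $\mo$ reduced modulo $2$, namely $T^2 = T+\mathds{1}$ (since $\theta^2 = \theta+1$), $J^2 = i\mathds{1}$ (since $j^2 = i$), and $JT = (\mathds{1}+T)J$ (since $j\theta = \bar{\theta}j$ and $\bar{\theta} \equiv 1+\theta \pmod 2$). The companion matrix $T = \left(\begin{smallmatrix} 0 & 1 \\ 1 & 1 \end{smallmatrix}\right)$, already used in Lemma~\ref{1+i}, disposes of the first relation.

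The main obstacle is the choice of $J$. In Lemma~\ref{1+i} the simple involution $\left(\begin{smallmatrix} 0 & 1 \\ 1 & 0 \end{smallmatrix}\right)$ worked because $[i]=[1]$ in $\F_2$; here, however, $i$ is a genuine unit of $\F_2[i]$ and, crucially, \emph{not a square}: a direct enumeration of $\F_2[i]=\{0,1,i,1+i\}$ shows that the squaring map has image $\{0,1\}$, so no scalar multiple of the previous $J$ can satisfy $J^2=i\mathds{1}$. To work around this I would parametrise candidates as $J = J_0(a\mathds{1}+bT)$ with $J_0 = \left(\begin{smallmatrix} 1 & 1 \\ 0 & 1 \end{smallmatrix}\right)$, since one checks directly that $J_0 T = (\mathds{1}+T)J_0$; the twist relation then holds automatically because $a\mathds{1}+bT$ lies in the centraliser of $T$. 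A short computation reduces $J^2 = i\mathds{1}$ to the single equation $a^2+ab+b^2 = i$ in $\F_2[i]$, which is solvable by e.g.\ $(a,b)=(1,i)$, yielding the explicit $J = \left(\begin{smallmatrix} 1+i & 1 \\ i & 1+i \end{smallmatrix}\right)$.

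With $T$ and $J$ fixed, the argument concludes as in Lemma~\ref{1+i}: I would verify that $\mathds{1}, T, J, TJ$ are $\F_2[i]$-linearly independent in $\mathcal{M}_2(\F_2[i])$ by reading off the four matrix entries and solving the resulting homogeneous system, giving injectivity of $\psi$; then verify multiplicativity by checking $\psi(w_k w_l)=\psi(w_k)\psi(w_l)$ on pairs of basis elements $w_k,w_l \in \{[1],[\theta],[j],[\theta j]\}$. The cardinality count then forces surjectivity, completing the isomorphism.
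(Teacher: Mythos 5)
Your proof is correct and follows essentially the same route as the paper: both invoke Lemma~\ref{free} to see that $\mo/2\mo$ is a free $\F_2[i]$-module on $\{[1],[\theta],[j],[\theta j]\}$, send the basis to explicit matrices satisfying the reduced relations $T^2=T+\mathds{1}$, $J^2=i\mathds{1}$, $JT=(\mathds{1}+T)J$, and conclude bijectivity from linear independence plus the cardinality count ($4^4=256$ on both sides). The only difference is the particular pair $(T,J)$ -- the paper takes $\phi([\theta])=\left(\begin{smallmatrix}1+i&1\\ i&i\end{smallmatrix}\right)$, $\phi([j])=\left(\begin{smallmatrix}0&1\\ i&0\end{smallmatrix}\right)$ while you derive $\left(\begin{smallmatrix}0&1\\1&1\end{smallmatrix}\right)$, $\left(\begin{smallmatrix}1+i&1\\ i&1+i\end{smallmatrix}\right)$ systematically via the equation $a^2+ab+b^2=i$ -- and both choices check out.
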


\begin{proof}
First of all, Lemma \ref{free} implies that $\mo/2 \mo$ is a free $\Z[i]/2$-module, that is a free $\F_2[i]$-module, of dimension $4$. 
As in the previous case, we can construct an explicit homomorphism of $\F_2[i]$-modules $\phi: \mo/2\mo \to \mathcal{M}_2(\F_2[i])$:
\begin{align*}
\phi([1])=\mathds{1}, \quad \phi([\theta])=\begin{pmatrix} 1+i & 1 \\ i & i \end{pmatrix},\\
\phi([j])=\begin{pmatrix} 0 & 1 \\ i & 0 \end{pmatrix}, \quad \phi([\theta j])=\phi([\theta]) \phi([j]) 
\end{align*} 
One can easily check that $\phi$ is bijective (the images of the basis elements being linearly independent) and that it is a ring homomorphism.
\end{proof}

To find an explicit isomorphism between $\og/2\og$ and $\mathcal{M}_2(\F_2)$, consider the following diagram, where $\pi_{\og}: \og \to \og/2\og$ is the projection on the quotient, $\varphi$ is given by the third isomorphism theorem for rings, and $\phi: \mo/2\mo \to \mathcal{M}_2(\F_2[i])$ is the mapping defined in Lemma \ref{quotient_lemma}: 
\begin{equation*} 
\og \xrightarrow{\quad\pi_{\og}\quad} \og/2\og \xrightarrow{\quad\varphi\quad} \mo/2\mo \xrightarrow{\quad\phi\quad} \mathcal{M}_2(\F_2[i])
\end{equation*}
The basis $\{\alpha,\alpha\theta,\alpha j,\alpha \theta j\}$ of $\og$ as a $\Z[i]$-module is also a basis of $\og/2\og$ as an $\F_2[i]$-module. The isomorphism $\varphi$ is simply the composition of the inclusion $\og \hookrightarrow \mo$ and the quotient mod $(1+i)\mo$.
We can compute the images through $\phi$ of the basis vectors: observing that
\begin{align*}
&\alpha=1+i-i\theta,\quad \alpha \theta=\theta-i,\\
&\alpha j=(1+i-i\theta)j,\quad\alpha\theta j=(\theta-i)j,
\end{align*}
we get
\begin{align} \label{basis2}
&\phi(\alpha)=
\begin{pmatrix} 0 & i \\ 1 & i\end{pmatrix}, \quad \phi(\alpha \theta)=
\begin{pmatrix} 1 & 1 \\ i & 0\end{pmatrix}, \\
&\phi(\alpha j)=
\begin{pmatrix}1 & 0 \\ 1 & 1 \end{pmatrix}, \quad \phi(\alpha \theta j)=
\begin{pmatrix}i & 1 \\ 0 & i \end{pmatrix}.
\end{align}
Also in this case, the lifts $X$ of non-invertible elements of $\mathcal{M}_2(\F_2[i])$ in $\gc$ will have non-invertible determinant, that is $\abs{\det{(X)}}^2\geq 2$.

\subsection{The encoder}
The codes that we consider follow the general outline of Forney's \emph{coset codes}, taking advantage of the decomposition $\gc=[\gc/I]+I$, where $I$ is $(1+i)\gc$ or $2\gc$, and $[\mathcal{G}/I]$ denotes a set of coset leaders.

\begin{enumerate}
\item[-] a binary $(n,k,d_{\min})$ encoder operates on some of the information data, and these coded bits are used to select $(C_1,\ldots,C_L) \in (\gc/I)^L$. 
\item[-] the remaining information bits are left uncoded and used to select $(Z_1,\ldots,Z_L) \in I^L$.
\item[-] the corresponding block codeword is $\mathbf{X}=(c_1+Z_1,\ldots,c_L+Z_L) \in \gc^L$, where $c_i$ is the coset leader of $C_i$.
\end{enumerate} 
The encoder is illustrated in Figure \ref{encoder}.\\
For a coset code, $\Delta_{\min}$ is bounded by the minimum determinant of $I$ and the minimum distance $d_{\min}$ of the binary code:
\begin{equation} \label{bound}
\Delta_{\min} \geq \min\left(\min_{X \in I \setminus \{0\}} \abs{\Det(X)}^2,d_{\min}^2\delta\right)
\end{equation}
In fact, if $(c_1,\ldots,c_L)=\mathbf{0}$, then $\mathbf{X} \in I^L$, and for $\mathbf{X} \neq \mathbf{0}$, $\det(\mathbf{X}\mathbf{X}^H) \geq \min_{X \in I \setminus \{0\}} \abs{\Det(X)}^2$. If on the contrary $(c_1,\ldots,c_L) \neq \mathbf{0}$, there are at least $d_{\min}$ components of $\mathbf{X}$ which do not belong to $I$, and consequently are nonzero, and $\Det(\mathbf{X}\mathbf{X}^H) \geq \delta w_H(\mathbf{X})\geq \delta d_{\min}^2$.\\
So the performance of a coset code will be always limited by the minimum determinant of $I$, except if the code on $I^L$ is the zero code. \\
If $I$ is simply $(1+i)\gc$ or $2\gc$, the set of possible coordinates $(a,b,c,d)$ for the coset leaders of $I$ in $\gc$ \emph{coincides} with the (BPSK)$^4$ and ($4$-QAM)$^4$ constellations respectively. This makes it much easier to implement coset codes with high Hamming distance.

\begin{figure}[btp]
\begin{center}
\includegraphics[width=0.35\textwidth]{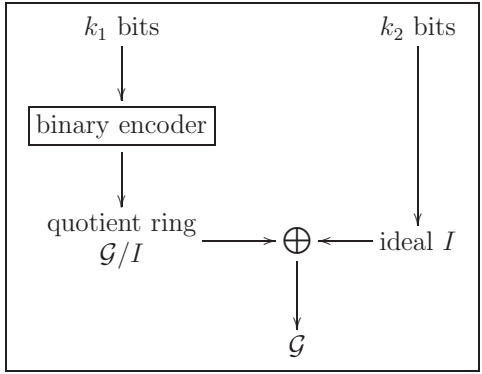}
\end{center}
\caption{The general structure of the encoder.}
\label{encoder}
\end{figure}

\section{The repetition code} \label{repetition_code}
Here we consider the case where $I=(1+i)\gc$, and the binary code is simply the repetition code of length $2$ over $\gc/I$.  
If $\pi: \gc \to \og/(1+i)\og$ is the projection on the quotient ring ($\pi(X)=\pi_{\og}(\sqrt{5}X)$), we define
$$\mathcal{C}=\{\mathbf{X}=(X_1,X_2) \in \gc^2 \;|\; \pi(X_1)=\pi(X_2)\}$$

\subsection{The minimum determinant}
Recall that as we have seen in Lemma \ref{formula_Belfiore}, 
$$\Det(\mathbf{X}\mathbf{X}^H)=\abs{\Det(X_1)}^2+\abs{\Det(X_2)}^2+\norm{\widetilde{X}_2 X_1}_F^2$$
With the code $\mathcal{C}$, we have $\Delta_{\min}=4\delta$. In fact if $(X_1,0)$ (respectively, $(0,X_2)$) is a codeword of Hamming weight $1$, clearly $\pi(X_1)=0$ and $\Det(\mathbf{X}\mathbf{X}^H)=\abs{\Det(X_1)}^2$ is greater than the minimum square determinant in $(1+i)\gc$, which is $4\delta$. If on the contrary $\pi(X_1)=\pi(X_2) \neq 0$, $$\Det(\mathbf{X}\mathbf{X}^H)\geq \left(\abs{\Det(X_1)}+\abs{\Det(X_2)}\right)^2\geq 4\delta$$ because of equation (\ref{det}). \\
By choosing any bijection $h$ of the quotient ring $\og/(1+i)\og$ in itself, one obtains a simple variation of the repetition scheme: 
$$\mathcal{C}_h=\{\mathbf{X}=(X_1,X_2) \in \gc^2 \;|\; \pi(X_2)=h(\pi(X_1))\}$$ 

\begin{rem}
A suitable choice of $h$ can slightly improve performance. In the case of the repetition code, suppose that $\pi(X_1)=\pi(X_2)=C_i$.
\begin{enumerate}
\item[-] If $C_i$ is invertible in $\mathcal{M}_2(\F_2)$, then
$\widetilde{C}_iC_i=\Det(C_i)\mathds{1}=\mathds{1}=\e_1+\e_2$ 
in the basis (\ref{basis}), and so the minimum determinant of a codeword $\widetilde{X}_2X_1 \in \pi^{-1}(\widetilde{C}_iC_i)$ is also $1$, and the minimum of $\norm{\widetilde{X}_2X_1}_F^2$ is $2\delta$. Thus $\Det(\mathbf{X}\mathbf{X}^H)\geq (1+1+2)\delta=4\delta$.
\item[-] If on the other side $C_i$ corresponds to a non-invertible, nonzero element in $\mathcal{M}_2(\F_2)$, then (see Remark \ref{non-invertible}) 
$$\min_{X \in \pi^{-1}(C_i)}\abs{\Det(X)}\geq \sqrt{2\delta}$$ 
and $\Det(\mathbf{X}\mathbf{X}^H)\geq \left(\abs{\Det(X_1)}+\abs{\Det(X_2)}\right)^2\geq(2\sqrt{2\delta})^2= 8\delta$. 
\end{enumerate}
This remark suggests that it might be more convenient to consider a group homomorphism $h: \mathcal{M}_2(\F_2) \to \mathcal{M}_2(\F_2)$ which maps invertible elements into non-invertible elements, raising the minimum determinant to $6\delta$ if $C_i$ invertible, $h(C_i)$ non-invertible: $\norm{\widetilde{X}_2X_1}_F^2\geq 2\sqrt{2}\delta$, but $\norm{\widetilde{X}_2X_1}_F^2 \in \delta\Z$ (see Remark \ref{summary}) and so $\norm{\widetilde{X}_2X_1}_F^2\geq 3\delta$, and $\Det(\mathbf{X}\mathbf{X}^H)\geq (1+2+3)\delta=6\delta$.\\ 
Such a function $\bar{h}$ is not difficult to define, and in the case of $4-QAM$ modulation, an exhaustive search on the finite lattice shows that the distribution of determinants for $\mathcal{C}_{\bar{h}}$ is indeed better.\footnote{In fact, if we define $\bar{h}(\e_1)=\e_1+\e_2+\e_4$, $\bar{h}(\e_2)=\e_2+\e_3+\e_4$, $\bar{h}(\e_3)=\e_1+\e_2+\e_3$, $\bar{h}(\e_4)=\e_1+\e_3+\e_4$ with respect to the basis (\ref{basis}), we have
\begin{scriptsize}
\begin{align*}
&\sum_{\mathbf{X} \in \mathcal{C}} q^{Det(\mathbf{X} \mathbf{X}^H)}=1+66q^4+120q^8+48q^{10}+202q^{16}+	\ldots\\
&\sum_{\mathbf{X} \in \mathcal{C}_{\bar{h}}} q^{Det(\mathbf{X} \mathbf{X}^H)}=1+24q^4+61q^8+24q^9+8q^{10}+74q^{12}+\ldots
\end{align*} 
\end{scriptsize} 
}
\end{rem}
\subsection{The encoder}
Only $4$ bits are needed to select an element of $\og/(1+i)\og\cong \mathcal{M}_2(\F_2)$, while the number of bits needed to select an element in the ideal depends on the chosen modulation scheme. Using $4$-QAM constellations, the two choices of an element in $\protect{(1+i)\gc}$ require $4$ bits each: in total, each codeword carries $12$ information bits, yielding a spectral efficiency of $3$ bpcu.\\
Suppose that $(b_1,\ldots,b_{12})$ is the binary input:
\begin{enumerate}
\item[-] $(b_1,\ldots,b_4)$ are used to select the matrix $b_1\e_1+b_2\e_2+b_3\e_3+b_4\e_4 \in \mathcal{M}_2(\F_2)$ in the basis (\ref{basis}). The corresponding element of $[\gc/(1+i)\gc]$ is $C=[b_1\alpha+b_2\alpha\theta+b_3\alpha j+b_4\alpha \theta j]$. 
\item[-] $(b_5,\ldots,b_{12}$ are used to select two codewords in $\protect{(1+i)\gc}$: $X_1=(1+i)(b_5\alpha+b_6\alpha\theta+b_7\alpha j+b_8\alpha \theta j)$, $X_2=\protect{(1+i)}(b_{9}\alpha+b_{10}\alpha\theta+b_{11}\alpha j+b_{12}\alpha \theta j)$.
\item[-] The final block codeword is $(C+X_1, h(C)+X_2)$. 
\end{enumerate} 

\subsection{Asymptotic coding gain} 
Since the minimum determinant doesn't change, the asymptotic coding gain estimate is the same for all choices of $h$. \\
We compare these schemes with the uncoded Golden Code at $3$ bpcu, using $4$-QAM constellations for the symbols $a,c$ and BPSK constellations for the symbols $b,d$ in each Golden codeword (see equation \ref{golden_codeword}). The average energy per symbol is $E_{\mathcal{S}}=0.5(0.5+0.25)=0.375$, and
$$\gamma_{\as}=\frac{\sqrt{\Delta_{\min}}/E_{\mathcal{S}}}{\sqrt{\Delta_{\min,U}}/E_{\mathcal{S},U}}=\frac{2/0.5}{1/0.375}=1.5,$$
This computation gives a theoretical gain of at least $10\log_{10}(1.5)\dB=1.7 \dB$.  

\subsection*{Simulation results}
Figure \ref{repetition} shows the performance of the codes $\mathcal{C}_{\Id}$ and $\mathcal{C}_{\bar{h}}$, which gain $2.4 \dB$ and $2.9 \dB$ respectively over the uncoded scheme at $3$ bpcu at the frame error rate of $10^{-3}$, supposing that the channel is constant for $2$ time blocks. 

\begin{figure}[tbp] 
\begin{center}
\includegraphics[width=0.4\textwidth]{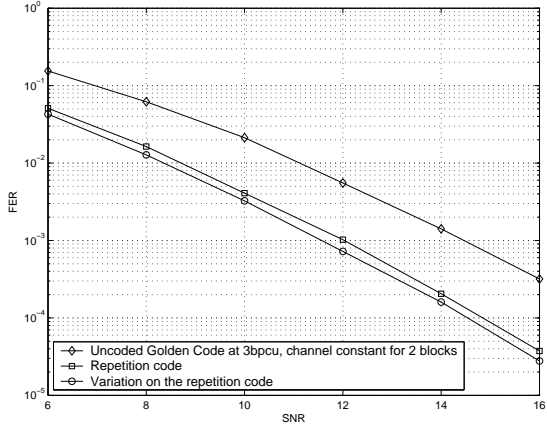}
\caption{Performance of the repetition code $\mathcal{C}_{\Id}$ and of the variation $\mathcal{C}_{\bar{h}}$ at $3$ bpcu compared with the uncoded Golden Code scheme with the same spectral efficiency. The channel is supposed to be constant for $2$ time blocks.}
\label{repetition}
\end{center}
\end{figure}

\section{Golden Reed-Solomon Codes} \label{grs}
The repetition code has the advantage of simplicity, but clearly its performance is limited by the fact that the minimum Hamming distance is only $1$. To increase the Hamming distance, we need to use a more sophisticated error-correcting code.\\
As we have seen in the previous sections, in addition to the minimum Hamming distance, also the multiplicative structure and the minimum number of non-invertible components have a significant influence on the coding gain of a block code design. Ideally, in order to keep track of these pa\-ra\-me\-ters, 
one ought to employ error-correcting codes on $\mathcal{M}_2(\F_2[i])$. However, at present very little is known about codes over non-commutative rings;
we choose shortened Reed-Solomon codes instead because they are maximum distance separable and their implementation is very simple; we will restrict our attention to the additive structure, defining a group isomorphism between $\og/2\og$ and the finite field $\F_{256}.$

\subsection{The 4-QAM case} \label{4Q}
Using $4$-QAM constellations to modulate each of the $4$ symbols $a,b,c,d$ in a Golden codeword (\ref{golden_codeword}), we obtain a total of $256$ codewords, one in each coset of $2\gc$. 
\medskip \par
We consider an $(n,k,d_{\min})$ Reed-Solomon code over $\F_{256}$. 
Each quadruple $(a,b,c,d)$ of $4$-QAM signals carries $8$ bits or one byte; each block of $n$ Golden codewords will carry $n$ bytes, corresponding to $k$ information bytes.\\
The encoding procedure involves several steps:

\step{Reed-Solomon encoding:}\label{s2}\\%
Each information byte can be seen as a binary polynomial of degree $\leq 8$, that is, an element of the Galois Field $\F_{256}$. An information message of $k$ bytes, seen as a vector $\mathbf{U}=(U_1,\ldots,U_k) \in \F_{256}^k$, is encoded into a codeword $\mathbf{V}=(V_1,\ldots,V_n)\in \F_{256}^n$ using the RS$(n,k,d_{\min})$ shortened code $\mathcal{C}$. For our purposes, it is much better to use a \emph{systematic} version of the code that preserves the first $k$ bits of the input.

\step{From the Galois field $\F_{256}$ to the matrix ring $\mathcal{M}_2(\F_2[i])$:} \label{3}\\%
We can represent the elements of $\mathcal{M}_2(\F_2[i])$ as bytes, simply by vectorising each matrix and separating real and imaginary parts.
Since we are only working with the additive structure, we can identify $\F_{256}$ and $\mathcal{M}_2(\F_2[i])$, which are both $\F_2$-vector spaces of dimension $8$. According to our simulation results, it seems that the choice of the linear identification has very little influence on the code performance.
\step{From the matrix ring $\mathcal{M}_2(\F_2[i])$ to the quotient ring $\og/2\og$:}\\ \label{4}%
For this step we make use of the isomorphism of $\F_2[i]$-modules $(\varphi \circ \phi)^{-1}: \mathcal{M}_2(\F_2[i]) \to \og/2\og$ described in Section \ref{quotient2}that relates the coordinates with respect to the bases $\mathcal{B}_{\og}=\{\alpha, \alpha \theta, \alpha j, \alpha \theta j\}$ and (\ref{basis2}).
Let $(a,b,c,d) \in \Z_2[i]^4$ be the coordinates of a codeword in the basis $\mathcal{B}_{\og}$.
\step{Golden Code encoding}: \\%
For each of the $n$ vector components, the symbols $a$,$b$,$c$,$d \in \Z_2[i]$  correspond to four $4$-QAM signals, and can be encoded into a Golden codeword of the form (\ref{golden_codeword}). Thus we have obtained a Golden block $\mathbf{X}=(X_1,X_2,\ldots,X_n)=\xi(\mathbf{V})$, where $\xi: \F_{256}^n \to \gc^n$ is injective.
\subsection{Decoding} \label{decoding}
ML decoding consists in the search for the minimum of the Euclidean distance
$$ \sum_{i=1}^n\norm{HX_i-Y_i}^2 $$
over all the images $\mathbf{X}=\xi(\mathbf{V'})$ of Reed-Solomon codewords.\\
One can first compute and store in memory the Euclidean distances 
\begin{equation} \label{dij}
d(i,j)=\norm{HX^{(j)}-Y_i}^2
\end{equation}
for every component $i=1,\ldots,n$ of the received vector $\mathbf{Y}$ and for all the Golden codewords $X^{(j)}, j=0,..,255$ that can be obtained from a quadruple $U^{(j)}$ of $4$-QAM symbols. \\
The search for the minimum can be carried out using the Viterbi algorithm or a tree search algorithm.
\subsubsection{Stack decoding} \label{stack}
For our computer simulations, we have chosen to use a stack decoding algorithm.
If the code is based on an $(n,k,d_{\min})$ Reed-Solomon code with systematic generator matrix, the $(256)^k$ codewords are the possible paths in a full tree with height $k$ and $256$ outgoing branches per node.\\
The decoder will store in a stack a certain number of triples $(s,\mathbf{u},d_{\mathbf{u}})$, where $\mathbf{u}$ is an incomplete path of length $s$ in the tree, and $d_{\mathbf{u}}$ is its distance from the initial segment $(Y_1,\ldots,Y_s)$ of $\mathbf{Y}$.\\
An upper bound $T$ for the minimum distance of the received point to the lattice of Golden-RS codewords will be used as a \vv{cost function} for the stack.
\stepit{Sorting of distances:} Before the search, for each component $i$, the distances $\{d(i,j)\}_{j=0,..,255}$ of equation (\ref{dij}) are sorted in increasing order: let
$$d(i,j_1(i)),d(i,j_2(i)),\ldots, d(i,j_{256}(i))$$ be the resulting sequence.
\stepit{First step:} At the beginning, the initial segments of length $1$ are inserted into a previously empty stack: the triples $$(1,j_1(0),d(0,j_1(0))),\ldots,(1,j_{256}(0),d(0,j_{256}(0)))$$ are entered in decreasing order with respect to the distance, discarding those whose distances are greater than $T$.
\stepit{Intermediate steps:} At each iteration of the algorithm, the triple $(s,\mathbf{u}=(j^{(1)},\ldots,j^{(s)}),d_{\mathbf{u}})$ currently at the top of the stack is examined. 
\begin{itemize}
\item If $s<k$, its \vv{children} nodes 
\begin{align*}
&(s, (\mathbf{u},r)=(j^{(1)},\ldots,j^{(s)},r),d_{(\mathbf{u},r)}), \\ &\text{for } r=j_1(s+1),j_2(s+1),\ldots,j_{256}(s+1)
\end{align*}
are generated, updating the corresponding Euclidean distances:
$$d_{(\mathbf{u},r)}=d_{\mathbf{u}}+d(s+1,r)$$
The \vv{parent} node is deleted from the stack and the children are inserted in the stack and sorted with respect to distance, or discarded if the distance is greater than $T$.\\
(Remark that since you know the minimum distances component-wise, you can require a stronger condition without losing optimality, namely, $d_{(u,r)}+\sum_{t=s+1}^n d(t,j_1(t))<T$).
\item If $s=k$, generate the Reed-Solomon codeword $\mathbf{v}=(v_1,\ldots,v_n)=G\mathbf{u}$ and store $(n,\mathbf{v},d_{\mathbf{v}})$ in the stack (recall that $\mathbf{u}$ is an initial segment of $\mathbf{v}$), where
$$d_{\mathbf{v}}=d_{\mathbf{u}}+\sum_{t=k+1}^{n} d(t,v_t)$$
\item If $s=n$, the search terminates and the initial segment of length $k$ of $\mathbf{u}$ is the decoded message.
\end{itemize}
\stepit{Choice of the cost function $T$:} A simple bound  for the decoder may be the distance from the received signal of the (unique) Golden-RS codeword corresponding to the \vv{closest choice} $\left(U^{(j_0(1))},\ldots, U^{(j_0(k))}\right)$ for the first $k$ components. Any subset of $k$ components may be used as well to improve the minimum provided that the corresponding lines in the Reed-Solomon generator matrix are linearly independent.

\subsection{Simulation results}
In the $4$-QAM case, the spectral efficiency of the Golden Reed-Solomon codes is given by
\begin{displaymath}
\frac{8k \text{ bits }}{2n \text{ channel uses }}=\frac{4k}{n} \text{ bpcu }
\end{displaymath}
From Lemma \ref{delta_d}, we get a lower bound for $\Delta_{\min}$: using an $(n,k,d_{\min})$ Reed-Solomon code, we have $\Delta_{\min}\geq \delta d_{\min}^2$. \\
If $k=\frac{n}{2}$, the spectral efficiency is $2$bpcu. Comparing the $4$-QAM, $(n,k,d_{\min})$ Golden-RS design ($E_{\mathcal{S}}=0.5$) with the uncoded Golden Code using BPSK ($E_{\mathcal{S},U}=0.25$), we get an asymptotic coding gain of:
\begin{equation} \label{gamma_2}
\gamma_{\as}=\frac{\sqrt{\Delta_{{\min}}}/E_{\mathcal{S}}}{\sqrt{\Delta_{{\min},U}}/E_{\mathcal{S},U}}=\frac{d_{\min}/0.5}{1/0.25}=\frac{d_{\min}}{2}
\end{equation}

Figures \ref{comparison} and \ref{634} show the performance comparisons of the Golden-RS codes $(4,2,3)$ and $(6,3,4)$ with the corresponding uncoded schemes at the spectral efficiency of $2$ bpcu.\\
Assuming the channel to be constant for $4$ blocks and $6$ blocks respectively, the Golden-RS codes outperform the uncoded scheme  by $6.1$ $\dB$ and $7.0$ $\dB$.\\
The gain for the $(4,2,3)$ code is unexpectedly high compared with the theoretical coding gain (\ref{gamma_2}) for $d=3$, that is $10 \log_{10}\left(\frac{3}{2}\right)\dB=1.7 \dB$. The rough estimate (\ref{gamma_2}) is based on the worst possible occurrence, that of a codeword of Hamming weight $3$ in which all three non-zero components correspond to invertible elements in the quotient. \\
However, we can verify empirically that in the $4$-QAM case and with our choice of the $(4,2,3)$ code, this event does not take place and in fact the actual value for $\Delta_{\min}$ found by computer search is $18$, giving an estimate for the gain of $3.2 \dB$,  a little closer to the observed value. \\
This favorable behavior might be due to the fact that the chosen constellation contains only one point in each coset, so that the codewords of Hamming distance $3$ are few.\\
Also for the $(6,3,4)$ code, the actual gain ($7.0 \dB$) is higher than the theoretical gain ($10\log_{10}2\dB=3.0 \dB$ based solely on the minimum Hamming distance; $5.3$ $\dB$ using the true value of $\Delta_{\min}$, that is $46$.)
\begin{figure}[bt] 
\begin{center}
\includegraphics[width=0.4\textwidth]{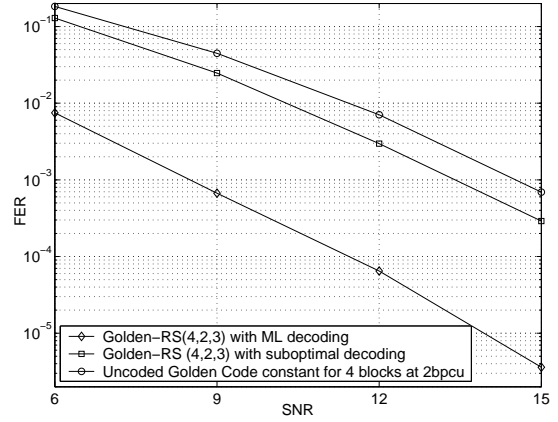}
\caption{Comparison between suboptimal decoding and ML decoding for the RS$(4,2,3)$ code at $2$ bpcu. The first method achieves a gain of only $1.1 \dB$ over the uncoded case, compared to the $6.1 \dB$ of the second.}
\label{comparison}
\end{center}
\end{figure}

\begin{figure}[bt]
\begin{center}
\includegraphics[width=0.4\textwidth]{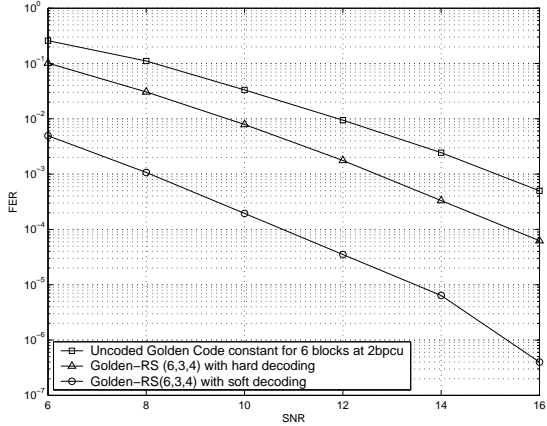}
\caption{Comparison between suboptimal decoding and ML decoding for the RS$(6,3,4)$ code at $2$ bpcu. The first method achieves a gain of $2.4 \dB$ over the uncoded case, compared to the $7.0 \dB$ of ML decoding.}
\label{634}
\end{center}
\end{figure}

\subsection{Sub-optimal decoding}
One can replace ML decoding with $n$ separate Sphere Decoders on each of the $n$ components of $\mathbf{Y}$. The signal is then demodulated, and mapped to a vector $(\hat{V}_1,\ldots,\hat{V}_n)$ in $\F_{256}^n$ using the inverse mappings of Steps \ref{4} and \ref{3} in Section \ref{4Q}. The received sequence $(\hat{V}_1,\ldots,\hat{V}_n)$ doesn't necessarily belong to the RS code, so a final step of RS decoding is needed.
\medskip \par
This \vv{hard} decoding has the advantage of speed and allows to use longer Reed-Solomon codes with high minimum distance. However it is highly suboptimal;
performance simulations show that with this method the coding gain is almost entirely cancelled out (see figure \ref{comparison}). \\
Suboptimal decoding also provides a good initial bound of the distance of the received point from the lattice, which can be used as a cost function for the stack decoder described in Section \ref{stack}. 

\begin{figure}[btp] 
\begin{center}
\includegraphics[width=0.4\textwidth]{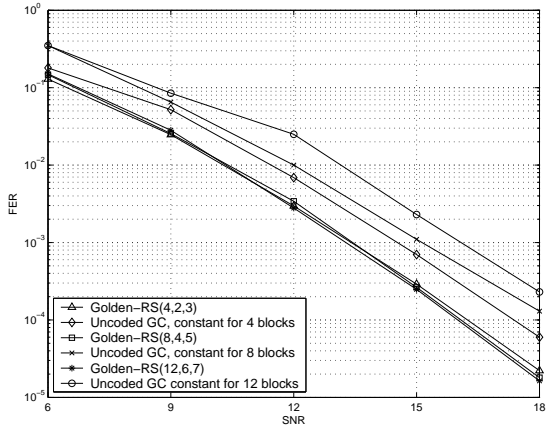}
\caption{Performance of $(4,2,3)$, $(8,4,5)$, and $(12,6,7)$ Golden Reed-Solomon codes with suboptimal decoding at $2$ bpcu compared with the uncoded Golden Code scheme with the same spectral efficiency.}
\label{2bpcu}
\end{center}
\end{figure}

\begin{figure}[btp] 
\begin{center}
\includegraphics[width=0.4\textwidth]{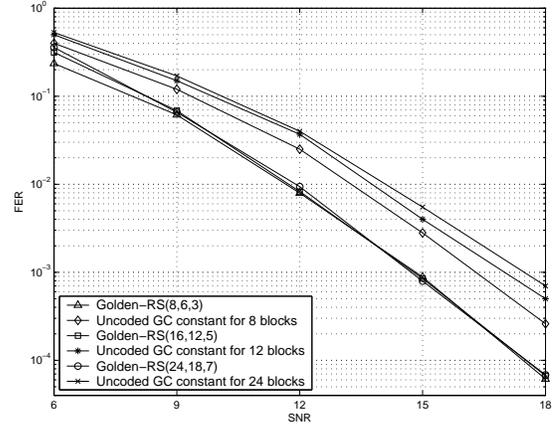}
\caption{Performance of $(8,6,3)$, $(16,12,5)$, and $(24,18,7)$ Golden Reed-Solomon codes with suboptimal decoding at $3$ bpcu compared with the uncoded Golden Code scheme with the same spectral efficiency.}
\label{3bpcu}
\end{center}
\end{figure}

\begin{itemize}
\item \textbf{2 bpcu:}\
Figure \ref{2bpcu} shows the performance comparison of the Golden-RS codes with suboptimal decoding with the uncoded scheme at the spectral efficiency of $2$ bpcu.\\
Assuming the channel to be constant for $4$, $8$ and $12$ blocks respectively, the $(4,2,3)$, $(8,4,5)$ and $(12,6,7)$ Golden-RS codes outperform the uncoded scheme at the same spectral efficiency by $1.1$ $\dB$, $1.7$ $\dB$ and $2.8$ $\dB$ at the FER of $10^{-3}$.\\
The Golden-RS schemes seem to be more robust on slow fading channels; in fact the performances of the Golden-RS$(n,k,d_{\min})$ codes on a channel which is constant for $n$ blocks remain almost unchanged (the variation is less than $0.2 \dB$) when $n$ varies between $4$ and $12$, while the uncoded Golden Code has a loss of almost $1.5$ $\dB$. 
\item \textbf{3 bpcu:}\
Assuming the channel to be constant for $8$, $16$ and $24$ blocks respectively, the $(8,6,3)$, $(16,12,5)$ and $(24,18,7)$ Golden-RS codes gain $1.5$ $\dB$, $2.2$ $\dB$ and $2.8$ $\dB$ over the uncoded scheme at the FER of $10^{-3}$ (see Figure \ref{3bpcu}).\\
Similarly to the previous case, the Golden-RS$(n,k,d_{\min})$ codes lose less than $0.3 \dB$ when $n$ varies between $8$ and $24$, while the Golden Code has a loss of $1.1 \dB$.
\end{itemize}

\subsection{The $16$-QAM case}
Using $16$-QAM modulation for each symbol $a,b,c,d$ in a Golden codeword, there are $2^{16}$ available Golden codewords, or $256$ words for each of the $256$ cosets of $2\gc$ in $\gc$. \\
As in the $4$-QAM case, we consider coset codes where the outer code is an $(n,k,d_{\min})$ Reed-Solomon code $\mathcal{C}$ on the quotient $\gc/2\gc$. Intuitively, the minimum distance of the Reed-Solomon code \vv{protects} the cosets from being decoded wrongly; if this choice is correct, the estimate for the right point in the coset is protected by the minimum determinant in $2\gc$. \\
The total information bits transmitted are $8k+8n$; they will be encoded into $8n+8n=16n$ bits.
\begin{itemize}
\item[-] The code $\mathcal{C}$ outputs $8n$ bits, which are used to encode the first two bits of $4n$ $16$-QAM constellations, that is the bits which identify one of the four cosets of $2\Z[i]$ in $\Z[i]$; each byte corresponds to a different coset configuration of $(a,b,c,d)$ (see Figure \ref{16QAM}). 
\item[-] the other $8n$ bits, left uncoded, are used to choose the last two bits of each $16$-QAM signal. 
\end{itemize}
In total, we have $4n$ $16$-QAM symbols, that is a vector of $n$ Golden codewords $\mathbf{X}=(X_1,\ldots,X_n)$. 
\begin{figure}[tbp]
\begin{center}
\includegraphics[width=0.35\textwidth]{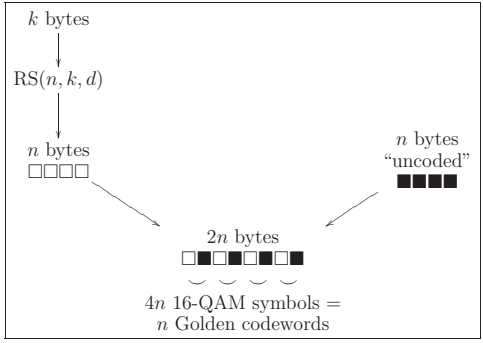}
\end{center}
\caption{The output of the Reed-Solomon code and the uncoded bits are \vv{mingled} before mo\-du\-la\-tion.}
\end{figure}
\begin{figure}[tbp]
\centering
\includegraphics[width=0.2\textwidth]{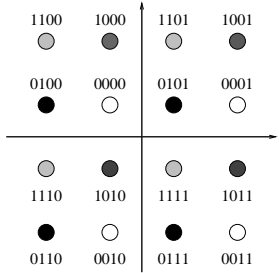}
\caption{The labelling of the $16$-QAM constellation used for performance simulations. The first and second bit identify one of the four cosets of $2\Z[i]$ in $\Z[i]$ (drawn in different shades of gray); the third and fourth bit identify one of the four points in the coset. We remark that this type of labelling cannot be a Gray mapping.}
\label{16QAM}
\end{figure}
The resulting spectral efficiency is
\begin{displaymath}
\frac{8(k+n) \text{ bits}}{2n \text{ channel uses}}=\frac{4(k+n)}{n} \text{ bpcu }
\end{displaymath}
In this case, the coding gain depends on the minimum Hamming distance inside each coset in addition to the minimum Hamming distance in the quotient: we have seen in (\ref{bound}) that 
\begin{equation} \label{d1d2}
\Delta_{\min} \geq \min\left(\min_{X \in 2\gc \setminus\{0\}},d_{\min}^2\right)=\min(16,d_{\min}^2)
\end{equation}

With an error-correcting code of rate $k=\frac{n}{2}$, we obtain a spectral efficiency of $6$ bpcu.
\begin{itemize}
\item[-]If $d_{\min}\geq 4$, we have $\gamma_{\as}=\frac{4/2.5}{1/1.5}=2.4$, leading to an approximate gain of $3.8 \dB$. Thus it does not seem worthwhile to use long codes with a high minimum distance with this scheme.
\item[-]If $d_{\min}=3$, $\gamma_{\as}=\frac{3/2.5}{1/1.5}=1.8$, making for a gain of $2.5 \dB$.
\end{itemize}

\subsection*{Decoding}
The ML decoding procedure for the $16$-QAM case requires only a slight mo\-di\-fi\-ca\-tion with respect to Step $6$ illustrated in Section \ref{decoding}. In the first phase, for each component $i=1,\ldots,n$ and for each coset leader $W_j$, $j=0,\ldots,255$, we find the closest point in that coset to the received component $Y_i$, that is
$$\hat{X}_{i,j}=\argmin_{X \in 2\gc} \norm{Y_i-H(X+W_j)}^2$$
Computing $HX$ and $HW_j$ separately allows to perform only $512$ products instead of $256^2$.
The second phase can be performed as in the $4$-QAM case, and the search is limited to the \vv{closest points} $\hat{X}_{i,j}+W_j$ determined in the previous phase:
$$ \hat{\mathbf{X}}=\argmin_{(\hat{X}_{1,j_1}+W_{j_1},\ldots,\hat{X}_{n,j_n}+W_{j_n})} \sum_{i=1}^n\norm{H(\hat{X}_{i,j_i}+W_{j_i})-Y_i}^2 $$
over all the images $(W_{j_1},\ldots,W_{j_n})$ of Reed-Solomon codewords.

\subsection*{Simulation results}
In the $16$-QAM case, the $(4,2,3)$ and $(6,3,4)$ Golden Reed-Solomon codes achieve a gain of $3.9 \dB$ and $4.3 \dB$ respectively over the uncoded scheme at $6$ bpcu at the frame error rate of $10^{-2}$, supposing that the channel is constant for $4$ and $6$ time blocks (see figure \ref{16QAM_new}).  

\begin{figure}[btp] 
\begin{center}
\includegraphics[width=0.4\textwidth]{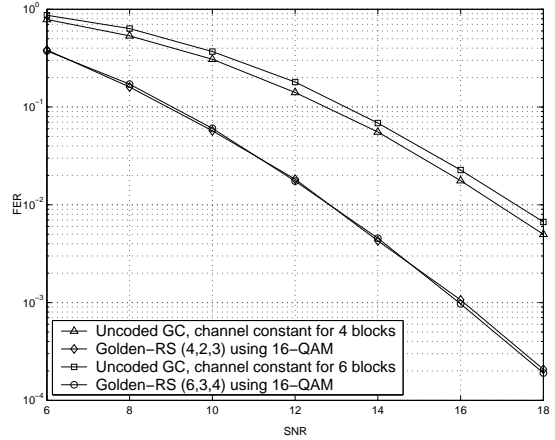}
\caption{Performance of the $(4,2,3)$ and $(6,3,4)$ Golden Reed-Solomon codes with ML decoding at $6$ bpcu compared with the uncoded schemes with the same spectral efficiency.}
\label{16QAM_new}
\end{center}
\end{figure}

\section{Conclusions}
In this paper we have presented Golden-RS codes, a coded modulation scheme for $2 \times 2$ slow fading MIMO channels, where the inner code is the Golden Code. \\
We use a simple binary partitioning, whose set of coset leaders coincides with a QAM symbol constellation. With a Reed-Solomon code as the outer code in order to increase the minimum Hamming distance among the codewords, we obtain a significant performance gain with respect to the uncoded case.

\appendices 

\section{Proofs} \label{proofs}
We report here some of the proofs for the results stated in the main part of the paper.
\begin{proof}[Proof of Lemma \ref{formula_Belfiore}]
For all $i=1,\ldots,L$, let $Q_i=X_i X_i^H$: then
\begin{align*}
&\Det(X_1 X_1^H + \ldots+ X_L X_L^H)\mathds{1}=\\
&=\Det(Q_1+\ldots+Q_L)\mathds{1}=\\ 
&=(\widetilde{Q}_1+\ldots+\widetilde{Q}_L)(Q_1+\ldots+Q_L)\mathds{1}=\\
&=\sum_{i,j=1}^L\widetilde{Q}_iQ_j= \sum_{i=1}^L\Det(Q_i)\mathds{1} + \sum_{i \neq j}\widetilde{Q}_iQ_j
\end{align*} 
We need to show that $\widetilde{Q}_iQ_j + \widetilde{Q}_jQ_i=\norm{\widetilde{X}_j X_i}_F^2 \mathbbm{1}$.\\
But $\norm{X}_F^2=\tr(XX^H)$, and therefore $\norm{\widetilde{X}_j X_i}_F^2= \tr(\widetilde{X}_jX_i X_i^H \widetilde{X}_j^H)$, and 
\begin{align*}
& \widetilde{Q}_jQ_i=\widetilde{X}_j^H \widetilde{X}_j X_i X_i^H, \quad \widetilde{Q}_iQ_j=\widetilde{\widetilde{Q}_jQ_i} \\
& \Rightarrow \widetilde{Q}_iQ_j + \widetilde{Q}_jQ_i= \tr(\widetilde{Q}_iQ_j)\mathbbm{1} = \tr(\widetilde{X}_jX_i X_i^H \widetilde{X}_j^H)\mathbbm{1},
\end{align*}
recalling that $\tr(AB)= \tr(BA)$.
\end{proof}

\begin{proof}[Proof of Remark \ref{summary}]
\begin{enumerate}
\item[a)] Let 
\begin{displaymath}
W=\left[\begin{array}{cc} w_1 & w_2 \\ i\overline{w}_2 & \overline{w}_1\end{array}\right],\quad w_1=t_1+is_1, w_2=t_2+is_2,
\end{displaymath}
where $t_1,t_2,s_1,s_2 \in \Z[\theta]$. Then $\norm{W}_F^2=\abs{w_1}^2+\abs{\overline{w}_1}^2+\abs{w_2}^2+\overline{\bar{w}_2}^2$. But $w_1=a+b\theta+i(c+d\theta)$ for some $a,b,c,d \in \Z$, and
\begin{multline*}
\abs{w_1}^2+\abs{\overline{w}_1}^2=\\
=(a+b\theta)^2+(c+d\theta)^2+(a+b\bt)^2+(c+d\bt)^2=\\ =2a^2+3b^2+2ab+2c^2+3d^2+2cd \in \Z
\end{multline*}
The same is true for $\abs{w_2}^2+\abs{\overline{w}_2}^2$.
\item[b)] If $X=\left[\begin{array}{cc} a & b \\ c & d\end{array}\right]$,
then 
\begin{multline*}
\fn{X}=\abs{a}^2+\abs{b}^2+\abs{c}^2+\abs{d}^2 \geq 2(\abs{ad}+\abs{bc}) \geq\\
 2\abs{ad-bc}=2\abs{\Det(X)}
\end{multline*}
and $$\fn{\widetilde{X}Y} \geq 2 \abs{\Det(\widetilde{X}Y)}=2\abs{\Det(X)\Det(Y)}$$ 
\item[c)] Let $X_1=\frac{1}{\sqrt{5}}A W_1, X_2=\frac{1}{\sqrt{5}}A W_2$, $W_1,W_2 \in \mathcal{O}$. Then
\begin{multline*}
\norm{\widetilde{X}_2X_1}_F^2=\frac{1}{25} \norm{\widetilde{W}_2 \widetilde{A} A W_1}_F^2=\\
\frac{\abs{N(\alpha)}^2}{25} \norm{\widetilde{W}_2 W_1}_F^2 = \frac{1}{5} \norm{\widetilde{W}_2 W_1}_F^2\geq \frac{2}{5},
\end{multline*}
since $W=\widetilde{W}_2 W_1$ belongs to $\mathcal{O}$.
\end{enumerate}
\end{proof}

\section{Quaternion Algebras} \label{qa}
This section summarizes some basic facts about quaternion algebras that are used in the paper. Our main references are the books of Vignéras \cite{Vig} and Reiner \cite{Re}.

\begin{defn}[\textbf{Quaternion algebras}] \label{quaternion_algebra}
Let $K$ be a field. A \emph{quaternion algebra} $\mh$ of center $K$ is a central simple algebra of dimension $4$ over $K$, such that there exists a separable quadratic extension $L$ of $K$, and an element $\gamma \in K^*$, such that
\begin{displaymath}
\mh=L\oplus L e, \qquad e^2=\gamma, \qquad ex=\sigma(x) e \quad \forall x \in L
\end{displaymath}
where $\sigma$ is the non-trivial $K$-automorphism of $L$. $L$ is called a \emph{maximal subfield} of $\mh$. $\mh$ will be denoted by the triple $(L/K, \sigma, \gamma)$.
\end{defn}
Quaternion algebras are a special case of \emph{cyclic algebras}. \\
To obtain a representation of $\mh$ as a $K$-module, consider a primitive element $i$ such that $L=K(i)$, and let $j=e$, $k=ij=j\sigma(i)$. Then
\begin{equation} \label{quaternions}
\mh=\left\{ a+bi+cj+dk \;|\; a,b,c,d \in K \right\}
\end{equation}
The following theorem gives a sufficient condition for a quaternion algebra to be a division ring:
\begin{theo} \label{norm}
Let $\mh=(L/K,\sigma,\gamma)$ be a quaternion algebra. If $\gamma$ is not a reduced norm of any element of $L$, then $\mh$ is a skewfield. 
\end{theo}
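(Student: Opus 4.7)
The plan is to prove $\mh$ is a skewfield by exhibiting an explicit two-sided inverse for every nonzero element, built from a reduced-norm map $N\colon\mh\to K$ together with a standard involution.

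First I would write a generic element as $q = x + ye$ with $x,y\in L$, and define the conjugate $\bar q := \sigma(x) - ye$. Using the defining relations $e^{2}=\gamma$ and $ex=\sigma(x)e$ for all $x\in L$, together with $\sigma^{2}=\mathrm{Id}_{L}$ (which holds because $[L:K]=2$), a direct calculation gives
\begin{equation*}
q\bar q \;=\; \bigl(x+ye\bigr)\bigl(\sigma(x)-ye\bigr) \;=\; x\sigma(x) \;-\; \gamma\, y\,\sigma(y) \;\in\; K,
\end{equation*}
and symmetrically $\bar q q = \sigma(x)x - \gamma\,\sigma(y)y$, which equals $q\bar q$ since $L$ is commutative. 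Denote this common value $N(q)$.

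The crucial step is to show $N(q)\neq 0$ whenever $q\neq 0$, and this is exactly where the hypothesis on $\gamma$ is used. If $y=0$ then $q=x\in L^{*}$ and $N(q) = N_{L/K}(x)\neq 0$ because $L/K$ is a separable quadratic field extension, so the norm of a nonzero element is nonzero. If instead $y\neq 0$, then $N(q)=0$ would give
\begin{equation*}
\gamma \;=\; \frac{x\sigma(x)}{y\sigma(y)} \;=\; N_{L/K}\bigl(x\,y^{-1}\bigr),
\end{equation*}
exhibiting $\gamma$ as the reduced norm (equivalently, the $L/K$-norm) of the element $xy^{-1}\in L$, which contradicts the assumption. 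Therefore $N(q)\in K^{*}$ in both cases.

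Consequently, for every nonzero $q\in\mh$, the element $N(q)^{-1}\bar q \in \mh$ is a two-sided inverse, so $\mh$ has no zero divisors and is a division ring. The only technically delicate point is the verification of $q\bar q = x\sigma(x) - \gamma y\sigma(y)$, which requires care with the twisted commutation $ex=\sigma(x)e$ and the identification $e^{2}=\gamma$; beyond that, the argument reduces to the non-degeneracy of $N_{L/K}$ on $L^{*}$, which is immediate from separability.
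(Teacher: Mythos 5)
Your proof is correct: the conjugation $q\mapsto\bar q$, the computation $q\bar q=\bar q q=x\sigma(x)-\gamma\,y\sigma(y)$, and the use of the hypothesis to rule out $N(q)=0$ for $q\neq 0$ together give the standard argument, and the paper itself states this theorem without proof (citing Vign\'eras and Reiner, where exactly this norm-form argument appears). The one detail worth making explicit is the case $x=0$, $y\neq 0$, where $N(q)=0$ would force $\gamma=0$, excluded since $\gamma\in K^{*}$; otherwise nothing is missing.
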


\begin{defn}[\textbf{Splitting fields}]
Let $\mh$ be a central simple $K$-algebra. An extension field $E$ of $K$ \emph{splits} $\mh$, or is a \emph{splitting field} for $\mh$, if 
$$E \otimes_K \mh \cong M_r(E)$$
\end{defn}

In the case of division algebras, every maximal subfield is a splitting field:
\begin{theo} \label{splitting_field} 
Let $\mathcal{D}$ be a skewfield with center $K$, with finite degree over $K$. Then every maximal subfield $E$ of $\mathcal{D}$ contains $K$, and is a splitting field for $\mathcal{D}$.
\end{theo}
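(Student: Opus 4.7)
The plan is to prove both assertions by successively exploiting the maximality of $E$ as a commutative subfield of $\mathcal{D}$. First I would establish $K\subseteq E$ together with the sharper property $C_{\mathcal{D}}(E)=E$; next I would invoke the double centralizer theorem to extract a numerical identity; finally I would construct an explicit $E$-algebra isomorphism $E\otimes_K\mathcal{D}\cong M_r(E)$ via an action on $\mathcal{D}$ itself.

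For the first step I would argue by contradiction: if $k\in K\setminus E$, then because $K$ is the center of $\mathcal{D}$, the element $k$ commutes with every element of $E$, so $E[k]\subset\mathcal{D}$ is a commutative subring. Being a finitely generated commutative domain inside a finite-dimensional $K$-algebra, $E[k]$ is a commutative subfield of $\mathcal{D}$ strictly larger than $E$, contradicting maximality. The identical reasoning, with any putative $d\in C_{\mathcal{D}}(E)\setminus E$ in place of $k$, forces $C_{\mathcal{D}}(E)=E$.

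Next I would feed this into the double centralizer theorem for the central simple $K$-algebra $\mathcal{D}$: $\dim_K C_{\mathcal{D}}(E)\cdot[E:K]=\dim_K\mathcal{D}$, which collapses to $[E:K]^2=\dim_K\mathcal{D}$. Viewing $\mathcal{D}$ as a left $E$-module through the inclusion then yields $\dim_E\mathcal{D}=[E:K]$, the crucial dimension equation for the final step. For the splitting I would define
\begin{equation*}
\varphi\colon E\otimes_K\mathcal{D}^{op}\longrightarrow \operatorname{End}_E(\mathcal{D}),\qquad \varphi(e\otimes d)(x)=exd,
\end{equation*}
which is a well-defined $E$-algebra homomorphism precisely because $E$ is commutative, and is nonzero since $\varphi(1\otimes 1)=\Id$. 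Because $\mathcal{D}^{op}$ is central simple over $K$, the scalar extension $E\otimes_K\mathcal{D}^{op}$ is central simple over $E$; simplicity forces $\ker\varphi=0$. A dimension count over $E$ gives $\dim_E(E\otimes_K\mathcal{D}^{op})=\dim_K\mathcal{D}=[E:K]^2=(\dim_E\mathcal{D})^2=\dim_E\operatorname{End}_E(\mathcal{D})$, so $\varphi$ is an isomorphism and $E\otimes_K\mathcal{D}^{op}\cong M_{[E:K]}(E)$. Passing to opposite algebras, together with $M_r(E)^{op}\cong M_r(E)$ via transposition, yields the desired identification $E\otimes_K\mathcal{D}\cong M_{[E:K]}(E)$, so $E$ splits $\mathcal{D}$ with $r=[E:K]$.

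The main obstacle is the appeal to the double centralizer theorem: the elementary arguments of the first paragraph and the dimension bookkeeping of the last are straightforward once one grants that scalar extension preserves central simplicity, but the centralizer dimension identity itself is a nontrivial classical result in the theory of central simple algebras (see Reiner \cite{Re}), on which the entire splitting argument pivots.
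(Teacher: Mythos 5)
The paper does not prove Theorem \ref{splitting_field} at all: it is stated in the appendix as a background fact, with the reader referred to Reiner and Vign\'eras. Your argument is the standard textbook proof of that classical result (maximal subfields are self-centralizing, the double centralizer theorem gives $[E:K]^2=\dim_K\mathcal{D}$, and the map $e\otimes d\mapsto(x\mapsto exd)$ realizes the splitting), and it is essentially correct; the only external input is the double centralizer theorem, which is a fair thing to quote at this level.

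One justification in your first step is misstated, though the conclusion survives. A ``finitely generated commutative domain'' need not be a field ($K[t]$ is a counterexample); what you actually need is a commutative domain that is \emph{finite-dimensional over a field}. For $E[k]$ with $k\in K\setminus E$ this is not immediate: $E[k]$ is a priori only the $E$-span of the powers of $k$, and at that point of the argument you have not shown $k$ to be algebraic over $E$, nor is $E[k]$ visibly a $K$-subspace of $\mathcal{D}$. The clean fix is to use the $K$-linear span $EK$ of $E$ inside $\mathcal{D}$: since $k\in K$ is central, $EK$ is a commutative subring, it is a $K$-subspace of the finite-dimensional $K$-vector space $\mathcal{D}$, and a finite-dimensional commutative $K$-algebra without zero divisors is a field; maximality of $E$ then forces $EK=E$, i.e.\ $K\subseteq E$. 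Once this is in place, for $d\in C_{\mathcal{D}}(E)\setminus E$ the ring $E[d]$ \emph{does} contain $K$ and hence is a finite-dimensional commutative $K$-algebra, so your argument for $C_{\mathcal{D}}(E)=E$ goes through verbatim. The remainder --- the dimension identity, the simplicity of $E\otimes_K\mathcal{D}^{op}$ forcing injectivity of $\varphi$, the count $\dim_E(E\otimes_K\mathcal{D}^{op})=[E:K]^2=\dim_E\operatorname{End}_E(\mathcal{D})$, and the passage to opposite algebras via transposition --- is correct as written.
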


In the following paragraphs we will always consider a Dedekind domain $R$, its quotient field $K$, and a quaternion algebra $\mh$ over $K$.

\begin{defn}[\textbf{Lattices and orders}] \label{lo}
A \emph{full $R$-lattice} or \emph{ideal} in $\mh$ is a finitely generated $R$-submodule $I$ in $\mh$ such that $K I=\mh$, where
\begin{displaymath}
KI=\left\{\sum_{i=1}^n k_i x_i \;\Big|\; k_i \in K,\, x_i \in I,\, n \in N\right\}
\end{displaymath} 
An \emph{$R$-order} $\Theta$ in $\mh$ is a full $R$-lattice which is also a subring of $\mh$ with the same unity element. A \emph{maximal $R$-order} is an order which is not properly contained in any other order of $\mh$. 
\end{defn}

For the following proposition see for example Reiner \cite{Re}:
\begin{prop}
A subring of $\mh$ containing a basis for $\mh$ over $K$ is an order if and only if all its elements are integral over $R$.
\end{prop}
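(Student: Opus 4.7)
My plan is to prove each direction separately, with the substantive work concentrated in the ``integrality implies order'' direction.

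For the ``only if'' direction, I would take $\Lambda$ an $R$-order and any $x \in \Lambda$. Because $R$ is Dedekind (hence Noetherian) and $\Lambda$ is a finitely generated $R$-module by definition, every $R$-submodule of $\Lambda$ is again finitely generated. Applied to $R[x] \subseteq \Lambda$, which is an $R$-submodule because $\Lambda$ is a subring, this shows $R[x]$ is a finitely generated $R$-module, and this is equivalent to $x$ being integral over $R$.

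For the converse, let $\{u_{1},\ldots,u_{4}\} \subseteq \Lambda$ be a $K$-basis of $\mh$. The equality $K\Lambda=\mh$ is immediate, so the only substantial task is to show that $\Lambda$ is a finitely generated $R$-module; it is then automatically a full $R$-lattice, and together with the subring hypothesis this qualifies $\Lambda$ as an $R$-order. Here I would use the classical dual-basis argument built on the reduced trace. The bilinear form $T(x,y)=\tr_{\mh/K}(xy)$ is non-degenerate on $\mh$, so there is a $K$-dual basis $\{u_{1}^{*},\ldots,u_{4}^{*}\}$ with $T(u_{i},u_{j}^{*})=\delta_{ij}$. For $x \in \Lambda$, writing $x=\sum_{j} c_{j} u_{j}^{*}$ with $c_{j} \in K$ gives $c_{j}=T(x,u_{j})=\tr_{\mh/K}(xu_{j})$. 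Since $\Lambda$ is a subring containing $u_{j}$, we have $xu_{j} \in \Lambda$, so $xu_{j}$ is integral over $R$ by hypothesis, and its reduced trace lies in the integral closure of $R$ in $K$. Because $R$ is Dedekind, this integral closure is $R$ itself, so $c_{j} \in R$. Hence $\Lambda \subseteq Ru_{1}^{*}+\cdots+Ru_{4}^{*}$, a finitely generated module over the Noetherian ring $R$, and $\Lambda$ is therefore finitely generated.

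The main obstacle is the justification that the reduced trace of an integral element lies in $R$. The plan is to pass to a splitting field $E$ of $\mh$, where $E\otimes_{K}\mh \cong M_{2}(E)$ and the reduced trace is the usual matrix trace. An integral element corresponds to a matrix annihilated by some monic polynomial in $R[T]$, so its eigenvalues (in an algebraic closure of $E$) are algebraic integers; their sum, the reduced trace, is therefore an algebraic integer, and being already in $K$ it must lie in $R$ by integral closedness. Everything else --- the existence of the dual basis, non-degeneracy of the trace form, and the Noetherian descent --- is standard.
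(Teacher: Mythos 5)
Your proof is correct. The paper offers no proof of this proposition at all---it simply refers the reader to Reiner's \emph{Maximal Orders}---and your argument (Noetherian descent applied to $R[x]$ inside the finitely generated lattice for the ``only if'' direction, and the dual-basis bound $\Lambda\subseteq Ru_1^*+\cdots+Ru_4^*$ via the non-degenerate reduced trace form for the converse, with the trace of an integral element landing in the integrally closed ring $R$) is precisely the standard proof of Reiner's Theorem 10.3, so there is nothing to flag beyond the tacit convention, shared by the paper's loose statement, that the subring is an $R$-submodule containing the unity of $\mh$.
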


\begin{rem}
The notion of order is a generalization of the notion of the ring of integers for commutative extensions. However, in the non-commutative case the set of elements which are integral over the base field might not be a ring.
\end{rem}

\begin{defn}[\textbf{Properties of ideals}] 
Given an ideal $I$ of $\mh$, we can define the \emph{left order} and the \emph{right order} of $I$ as follows:
\begin{align*}
& \Theta_{l}(I)=\{x \in \mh \,|\, Ix \subset I\}, \\
& \Theta_{r}(I)=\{x \in \mh \,|\, xI \subset I\}
\end{align*} 
$\Theta_{l}(I)$ and $\Theta_r(I)$ are orders. $I$ is called 
\begin{itemize}
\item \emph{two-sided} if $\Theta_l(I)=\Theta_r(I)$,  
\item \emph{normal} if $\Theta_l(I)$ and $\Theta_r(I)$ are maximal, 
\item \emph{integral} if $I \subset \Theta_l(I)$, $I \subset \Theta_r(I)$, 
\item \emph{principal} if $I=\Theta_l(I) x=x \Theta_r(I)$ for some $x \in \mh$ 
\end{itemize}
The \emph{inverse} of $I$ is the fractional ideal $I^{-1}=\{ x \in \mh \,|\, IxI \subset I\}$.\\
The \emph{norm} $N(I)$ of an ideal $I$ is the set of reduced norms of its elements, and it is an ideal of $R$. If $I=\Theta x$ is principal, $N(I)=R N(x)$.
\end{defn}

\section{Ideals, valuations and maximal orders} \label{ivmo}

\begin{defn}[\textbf{Prime ideals}]
Let $\Theta$ be an order, $\pg$ a two-sided ideal of $\Theta$ (that is, the left and right order of $I$ coincide with $\Theta$). $\pg$ is \emph{prime} if it is nonzero and $\forall I,J$ integer two-sided ideals of $\Theta$, $IJ \subset \pg \Rightarrow I \subset \pg$ or $J \subset \pg$.  
\end{defn}

The proofs of the following theorems can be found in Reiner's book \cite{Re}:

\begin{theo}
The two-sided ideals of an order $\Theta$ form a free group generated by the prime ideals.
\end{theo}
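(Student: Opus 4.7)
The plan is to reduce the global statement to a local one via completion at each prime of $R$, classify two-sided ideals locally, and then reassemble the global factorization using a local-global principle. Throughout, I would assume $\Theta$ is a maximal $R$-order (as is the case for $\mathcal{O}$ in the paper), since without maximality the two-sided ideals generally do \emph{not} form a group at all.

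First I would localize. For each nonzero prime $\mathfrak{p}$ of $R$, set $R_{\mathfrak{p}}$ to be the completion, $K_{\mathfrak{p}}=\mathrm{Frac}(R_{\mathfrak{p}})$, $\mathcal{H}_{\mathfrak{p}}=\mathcal{H}\otimes_K K_{\mathfrak{p}}$ and $\Theta_{\mathfrak{p}}=\Theta\otimes_R R_{\mathfrak{p}}$. The key fact from the theory of central simple algebras over local fields is that $\mathcal{H}_{\mathfrak{p}}$ is either isomorphic to $M_2(K_{\mathfrak{p}})$ (the unramified or split case) or to the unique quaternion division algebra $D_{\mathfrak{p}}$ over $K_{\mathfrak{p}}$ (the ramified case), and in both situations $\Theta_{\mathfrak{p}}$ is the unique maximal $R_{\mathfrak{p}}$-order. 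I would then verify the local classification:
\begin{itemize}
\item In the split case $\Theta_{\mathfrak{p}}\cong M_2(R_{\mathfrak{p}})$, and by a Morita-type argument the two-sided $\Theta_{\mathfrak{p}}$-ideals are exactly $\pi^n M_2(R_{\mathfrak{p}})$ for $n\in\Z$, where $\pi$ is a uniformizer of $R_{\mathfrak{p}}$; the unique prime is $\pi\Theta_{\mathfrak{p}}$.
\item In the ramified case $\Theta_{\mathfrak{p}}$ is the valuation ring of $D_{\mathfrak{p}}$; the non-commutative valuation $v_{D}$ extending $v_{\mathfrak{p}}$ has a uniformizer $\Pi$ with $\Pi^2$ a unit multiple of $\pi$, and every two-sided ideal is $\Pi^n\Theta_{\mathfrak{p}}$; the unique prime $\mathfrak{P}_{\mathfrak{p}}=\Pi\Theta_{\mathfrak{p}}$ satisfies $\mathfrak{P}_{\mathfrak{p}}^2=\pi\Theta_{\mathfrak{p}}$.
\end{itemize}
In either case the monoid of integral two-sided ideals of $\Theta_{\mathfrak{p}}$ is infinite cyclic, generated by the unique prime above $\mathfrak{p}$.

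Next I would globalize. Given a two-sided $\Theta$-ideal $I$, I define $I_{\mathfrak{p}}=I\otimes_R R_{\mathfrak{p}}$; standard arguments show $I_{\mathfrak{p}}=\Theta_{\mathfrak{p}}$ for almost all $\mathfrak{p}$ and that $I$ is recovered as $\bigcap_{\mathfrak{p}} (I_{\mathfrak{p}}\cap \mathcal{H})$. Writing each $I_{\mathfrak{p}}=\mathfrak{P}_{\mathfrak{p}}^{n_{\mathfrak{p}}}$ with finitely many nonzero exponents, and defining $\prod_{\mathfrak{p}} \mathfrak{P}^{n_{\mathfrak{p}}}$ globally (where $\mathfrak{P}$ is either $\mathfrak{p}\Theta$ in the split case or the unique ramified two-sided prime lying over $\mathfrak{p}$), I obtain a factorization of $I$. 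Existence comes from matching local data; uniqueness follows because any alternative factorization would, after localizing at each $\mathfrak{p}$, contradict the cyclic local structure.

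The main obstacle will be justifying the local structure theorem in the ramified case: constructing the non-commutative discrete valuation on $D_{\mathfrak{p}}$ and showing $\Theta_{\mathfrak{p}}$ coincides with its valuation ring. Once that is in hand, the remaining steps (multiplicativity of localization, the local-global recovery of ideals, and the fact that the prime ideals $\mathfrak{P}$ commute pairwise since they are supported at distinct primes of $R$) are essentially formal and mirror the classical Dedekind case. The conclusion is that the group of two-sided fractional ideals of $\Theta$ is the free abelian group on the set of two-sided prime ideals, exactly as stated.
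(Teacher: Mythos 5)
The paper does not prove this theorem itself but defers to Reiner's \emph{Maximal Orders}, and your localization--completion argument (local classification of two-sided ideals as powers of the unique prime over each $\mathfrak{p}$, then local--global reassembly) is precisely the proof given in that reference, so the two approaches coincide; you are also right that maximality of $\Theta$ must be assumed, since the paper's statement omits it but without it the two-sided ideals need not form a group. One small slip: in the split case the maximal order of $M_2(K_{\mathfrak{p}})$ is \emph{not} unique---there are infinitely many, all conjugate to $M_2(R_{\mathfrak{p}})$---though since conjugation preserves the lattice of two-sided ideals this does not affect your classification.
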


\begin{theo} \label{1-1}
Let $\Theta$ be a maximal order in a quaternion algebra $\mh$. Then the prime ideals of $\Theta$ coincide with the maximal two-sided ideals of $\Theta$, and there is a one-to-one correspondence between the prime ideals $\pg$ in $\mh$ and the prime ideals $P$ of $R$, given by $P=R \cap \pg$.\\
Moreover, $\Theta/\pg$ is a simple algebra over the finite field $R/P$.  
\end{theo}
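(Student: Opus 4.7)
The plan is to reduce the entire statement to a purely local assertion by localizing $\Theta$ at each nonzero prime $P$ of $R$, and then invoke the classification of maximal orders over a complete discrete valuation ring, which is the substantive input from Reiner's treatment. Once the local picture is established, the global correspondence is obtained by patching, and the quotient structure drops out by the third isomorphism theorem, exactly as was used for $(1+i)\mo$ in the previous subsection.

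Concretely, I would proceed in three steps. First, given a nonzero two-sided prime $\pg$ of $\Theta$, I would show that $P := R \cap \pg$ is a nonzero prime ideal of $R$: nonzero because $\Theta$ is a finitely generated $R$-module so its two-sided ideals contract to nonzero ideals of the center, and prime because $ab \in P$ with $a,b\in R$ gives $ab\Theta \subset \pg$, and primality of $\pg$ forces $a\Theta \subset \pg$ or $b\Theta \subset \pg$. Second, for the reverse direction, given a nonzero prime $P$ of $R$, I would localize to $\Theta_P$, still a maximal $R_P$-order in $\mh_P = K_P \otimes_K \mh$, and appeal to the local classification: either $\mh_P$ splits, in which case $\Theta_P \cong \mathcal{M}_2(R_P)$ and the unique maximal two-sided ideal is $\mathcal{M}_2(PR_P)$ with quotient $\mathcal{M}_2(R/P)$; or $\mh_P$ remains a division algebra, in which case $\Theta_P$ is the unique maximal order (the set of elements integral over $R_P$), a noncommutative DVR with a unique maximal ideal whose residue ring is a finite field extension of $R/P$. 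In both cases the local maximal two-sided ideal $\pg_P$ is unique, coincides with the unique prime ideal of $\Theta_P$, satisfies $R_P \cap \pg_P = PR_P$, and the residue ring $\Theta_P/\pg_P$ is a simple algebra over $R/P$. Globalizing by $\pg = \pg_P \cap \Theta$ produces the desired prime of $\Theta$ above $P$, and uniqueness follows because any prime above $P$ must localize to the unique $\pg_P$. Finally, the third isomorphism theorem applied as in equation (\ref{isomorphism}) yields $\Theta/\pg \cong \Theta_P/\pg_P$, giving the simplicity assertion.

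The coincidence of prime ideals and maximal two-sided ideals then follows from two observations: on the one hand, the local classification shows that $\pg_P$ is the unique maximal two-sided ideal of $\Theta_P$, hence globally each prime is maximal among two-sided ideals; on the other hand, if $\mathfrak{M}$ is any maximal proper two-sided ideal, decomposing it in the free group of two-sided ideals generated by primes (the previous theorem) forces $\mathfrak{M}$ itself to be prime, since otherwise $\mathfrak{M}$ would be a proper product of primes and hence strictly contained in each factor.

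The hard part is not the global bookkeeping but the local structure theorem: showing that a maximal order in a local quaternion algebra has a unique maximal two-sided ideal whose residue ring is a simple $R/P$-algebra. In the split case this is elementary matrix algebra, but in the ramified case one must construct the unique maximal order explicitly as the valuation ring attached to the extension of the $P$-adic valuation to the division algebra, verify maximality by a discriminant or Hensel-lifting argument, and identify its residue as a finite-dimensional division algebra over $R/P$. All of this noncommutative valuation theory is standard (cf.\ Reiner \cite{Re}) but is the genuine content underlying the theorem.
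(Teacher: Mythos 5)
The paper gives no proof of this theorem: it is stated in the appendix as a quoted result, with the explicit remark that ``the proofs of the following theorems can be found in Reiner's book,'' so there is nothing in the paper to compare your argument against. Your localize-at-$P$, classify-the-local-maximal-order, and patch outline is exactly the standard argument from that reference, and it is correct as a sketch --- the local structure theorem you defer to (split case $\mathcal{M}_2(R_P)$ versus the valuation ring of the local division algebra) is indeed the genuine content, and your global bookkeeping (contraction of primes, uniqueness via local uniqueness, maximality via the unique factorization of two-sided ideals into primes) is sound.
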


\begin{defn}[\textbf{Valuations and local fields}]
A \emph{valuation} $v$ of $K$ is a positive real function of $K$ such that $\forall k, h \in K$,
\begin{enumerate}
\item $v(k)=0 \Leftrightarrow k=0$,
\item $v(kh)=v(k)v(h)$,
\item $v(k+h)\leq v(k)+v(h)$.
\end{enumerate}
$v$ is \emph{non-archimedean} if $v(k+h) \leq \max(v(k),v(h)) \; \forall k,h \in K$; it is \emph{discrete} if $v(K^*)$ is an infinite cyclic group.\\
$K$ can be endowed with a topology induced by $v$ in the following way: a neighborhood basis of a point $k$ is given by the sets 
\begin{displaymath}
U_{\varepsilon}(k)=\{h \in K \,|\, v(h-k)<\varepsilon\}
\end{displaymath}
$K$ will be called \emph{complete} if it is complete with respect to this topology.\\
If $v$ is non archimedean, the set 
\begin{displaymath}
R_v=\{k \in K \,|\, v(k) \leq 1\}
\end{displaymath}
is a local ring, called the \emph{valuation ring} of $v$. The quotient $R_v/P_v$, where $P_v$ is the unique maximal ideal of $R_v$, is called the \emph{field of residues} of $K$.\\ 
$K$ is a \emph{local field} if it is complete with respect to a discrete valuation $v$ and if $R_v/P_v$ is finite. 
\end{defn}

\begin{defn}[\textbf{Places}] 
A \emph{place} $v$ of $K$ is an immersion $i_v:K \to K_v$ into a local field $K_v$. If $v$ is non-archimedean, we say that it is a \emph{finite place}; otherwise, that it is an \emph{infinite place}.
\end{defn}

The finite places of $K$ arise from discrete $P$-adic valuations of $K$, where $P$ ranges over the maximal ideals in the ring of integers $R$ of $K$. (Recall that the ring of integers in a number field is always a Dedekind domain, and so the maximal ideals coincide with the prime ideals).
\medskip \par

\begin{defn}[\textbf{Ramified places}]
Let $\mh$ be a quaternion algebra over $K$, and $P$ a place of $K$.\\
Consider the $K$-module $\mh_P=\mh \otimes_K K_P$; $\mh_P$ is isomorphic to a matrix algebra $M_r(D)$ over a skew field $D$ of center $K_P$ and index $m_P$ over $K_P$; $m_P$ is called the \emph{local index} of $\mh$ at $P$.  
We say that $P$ is \emph{ramified} in $\mh$ if $m_P>1$. 
\end{defn}

Given a maximal order $\Theta$, the set $\Ram(\mh)$ of ramified places of $\mh$ is related to a particular two-sided ideal of $\Theta$:

\begin{defn}[\textbf{Different and discriminant}]
Let $\Theta$ be an order. The set 
\begin{displaymath}
\Theta^*=\{x \in \mh \,|\, tr(x \Theta) \subset R\}
\end{displaymath}
is a two-sided ideal, called the \emph{dual} of $\Theta$. Its inverse $\mathfrak{D}=(\Theta^*)^{-1}$ is a two-sided integral ideal, called the \emph{different} of $\Theta$.  
If $\{w_1,\ldots,w_4\}$ is a basis of $\Theta$ as a free $R$-module, 
\begin{displaymath}
(n(\mathfrak{D}))^2=R \Det(\tr(w_i w_j))
\end{displaymath}
The ideal $n(\mathfrak{D})$ of $R$ is called the \emph{reduced discriminant} of $\Theta$ and is denoted by $d(\Theta)$.
\end{defn}

\begin{prop} 
If $\Theta, \Theta'$ are two orders and $\Theta' \subsetneq \Theta$, then $d(\Theta') \subsetneq d(\Theta)$.  
\end{prop}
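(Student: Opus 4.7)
The plan is to compare the two discriminant ideals through the change-of-basis matrix expressing a basis of $\Theta'$ in terms of a basis of $\Theta$. Since the defining formula $(n(\mathfrak{D}))^2 = R\,\Det(\tr(w_iw_j))$ requires $\Theta$ to be a free $R$-module, I would first reduce to this case by localising: for any prime $P$ of $R$, both $\Theta_P$ and $\Theta'_P$ are free $R_P$-modules of rank $4$ over the DVR $R_P$, the formation of $d$ commutes with localisation, and strict containment of ideals in $R$ can be tested prime by prime. So it suffices to prove the statement assuming $R$ is a DVR and both orders are free of rank $4$.

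Under that reduction, fix an $R$-basis $(w_1,\ldots,w_4)$ of $\Theta$ and an $R$-basis $(w_1',\ldots,w_4')$ of $\Theta'$. The inclusion $\Theta' \subset \Theta$ yields a matrix $C=(C_{ij}) \in M_4(R)$ with $w_i'=\sum_j C_{ij} w_j$. Bilinearity of the reduced trace form then gives
\[
\bigl(\tr(w_i' w_j')\bigr)_{i,j} \;=\; C\,\bigl(\tr(w_i w_j)\bigr)_{i,j}\,C^{T},
\]
and taking determinants produces $\Det(\tr(w_i' w_j')) = (\det C)^2\,\Det(\tr(w_iw_j))$. Rephrasing this in terms of discriminant ideals,
\[
\bigl(d(\Theta')\bigr)^{2} \;=\; (\det C)^{2}\,\bigl(d(\Theta)\bigr)^{2}.
\]

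Since $\Theta' \subsetneq \Theta$ strictly, the matrix $C$ cannot be invertible over $R$ (otherwise $C^{-1}$ would exhibit the reverse inclusion and force equality), hence $\det C$ is a non-unit and $(\det C) \subsetneq R$. Multiplying by $(d(\Theta))^2$, which is nonzero because the reduced trace pairing is non-degenerate on the central simple $K$-algebra $\mh$, we obtain $(d(\Theta'))^2 \subsetneq (d(\Theta))^2$. In a DVR every nonzero ideal is a power of the maximal ideal, so squaring is an injective operation on nonzero ideals, and we conclude $d(\Theta') \subsetneq d(\Theta)$.

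The only subtle point is the passage from $\Theta' \subsetneq \Theta$ globally to strict containment at some prime $P$ (needed to recover the strict global inclusion $d(\Theta') \subsetneq d(\Theta)$); this follows because if $\Theta'_P = \Theta_P$ at every prime then $\Theta' = \Theta$ by the local-global principle for finitely generated $R$-modules. Everything else is a direct calculation with the trace form, and I do not anticipate serious obstacles beyond ensuring this reduction step is stated cleanly.
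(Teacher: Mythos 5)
Your proof is correct; the paper itself gives no argument for this proposition (it is quoted as background from Reiner \cite{Re}), and your localisation-plus-change-of-basis computation is exactly the standard textbook proof of that result. The two points that usually need care --- that $\det C$ is a non-unit precisely when the local inclusion is strict, and that strictness survives passing from the square of the reduced discriminant back to the reduced discriminant itself --- are both handled properly.
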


The notion of ramification for quaternion algebras is a generalization of the notion of ramification for field extensions:

\begin{theo}
Let $\Theta$ be a maximal order in $\mh$. For each place $P$ of $K$, let $m_P$ be the local index of $\mh$ at $P$, and let $\pg$ be the prime ideal of $\Theta$ corresponding to $P$ (see Theorem \ref{1-1}). Then $m_P>1$ only for a finite number of places $P$, and
\begin{displaymath}
P\Theta=\pg^{m_P}, \qquad \mathfrak{D}=\prod_{P \in \Ram(\mh)} \pg^{m_P-1}
\end{displaymath}
\end{theo}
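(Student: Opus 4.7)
The approach is to localize at each finite place $P$ and analyse the two possible structures of $\mh_P=\mh\otimes_K K_P$. The theorem then follows from globalization, since two-sided ideals of a maximal $R$-order in $\mh$ are determined by their local completions at the primes of $R$.

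First I would dispatch the finiteness statement. Picking a $K$-basis of $\mh$ inside $\Theta$ yields structure constants in $R$; for all but finitely many maximal ideals $P$, the reduction $\Theta/P\Theta$ is a $4$-dimensional semisimple algebra over the finite field $R/P$, and by Wedderburn together with the fact that every finite division ring is a field, such a reduction must be $M_{2}(R/P)$. Hence $\mh_P\cong M_2(K_P)$, i.e.\ $m_P=1$, for almost all $P$.

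Next I would reduce to a purely local question. Under $\Theta\mapsto\Theta_P=\Theta\otimes_R R_P$, the correspondence of Theorem~\ref{1-1} becomes $\pg\leftrightarrow\pg_P$, and the statement $P\Theta=\pg^{m_P}$ is equivalent to $P\Theta_P=\pg_P^{m_P}$. There are two cases. If $m_P=1$, then $\mh_P\cong M_2(K_P)$, every maximal order is conjugate to $M_2(R_P)$, and $P\Theta_P=\pi_P M_2(R_P)$, which is the unique maximal two-sided ideal $\pg_P$, so $P\Theta_P=\pg_P^{1}$. If $m_P=2$, then $\mh_P$ is a central division algebra of dimension $4$ over $K_P$; the unique maximal $R_P$-order is its valuation ring, and its maximal two-sided ideal $\pg_P$ is generated by a uniformizer $\pi$ satisfying $\pi^{2}=u\,\pi_P$ for a unit $u$. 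Thus $P\Theta_P=\pi_P\Theta_P=\pg_P^{2}=\pg_P^{m_P}$.

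For the different, I would compute locally and then globalize using $d(\Theta)=\prod_P d(\Theta_P)$ (which follows from the basis determinant formula $n(\mathfrak{D})^{2}=R\,\mathrm{Det}(\tr(w_iw_j))$ applied place by place). At unramified $P$, a direct calculation of the reduced trace pairing on $M_2(R_P)$ shows that $\mathfrak{D}_P=\Theta_P$, contributing trivially. At ramified $P$, one picks an unramified quadratic extension $L_P/K_P$ inside $\mh_P$ with valuation ring $S_P$, writes $\Theta_P=S_P\oplus S_P\,\pi$, computes the trace pairing on the basis $\{1,i,\pi,i\pi\}$ (where $L_P=K_P(i)$) and finds that its determinant generates $\pi_PR_P$; therefore $n(\mathfrak{D}_P)=\pi_P R_P$, equivalently $\mathfrak{D}_P=\pg_P=\pg_P^{m_P-1}$. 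Multiplying the local contributions gives $\mathfrak{D}=\prod_{P\in\Ram(\mh)}\pg^{m_P-1}$, as asserted.

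The main obstacle is the ramified local analysis: one must know that a central quaternion division algebra over a local field admits a unique maximal order (its valuation ring), that its unique two-sided maximal ideal has ramification index $2$ over $P$, and that the reduced trace pairing on this order has discriminant exactly of valuation $1$. Once these standard facts about local division algebras are in hand, the finiteness step and the split-case calculation are essentially bookkeeping, and the globalization is automatic from the local-to-global behaviour of two-sided ideals in a maximal order.
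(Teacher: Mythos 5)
The paper does not actually prove this theorem: it appears in Appendix C as background, with the proofs explicitly deferred to Reiner's \emph{Maximal Orders} and Vign\'eras's lecture notes. So there is no in-paper argument to compare against; what you have written is, in substance, the standard local--global proof found in those references, and it is sound. Your three-step structure (finiteness via good reduction at almost all $P$, the case split $m_P=1$ versus $m_P=2$ for $P\Theta_P=\pg_P^{m_P}$, and the local computation of the trace-form discriminant followed by globalization) is exactly how the cited sources proceed, and the local facts you flag as the crux --- uniqueness of the maximal order in a local quaternion division algebra, ramification index $2$ of its maximal ideal, and discriminant of valuation $1$ for the reduced trace pairing on $S_P\oplus S_P\pi$ --- are precisely the nontrivial inputs. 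One small point worth tightening in the finiteness step: semisimplicity of $\Theta/P\Theta$ alone does not immediately force it to be $M_2(R/P)$ (a $4$-dimensional semisimple algebra over a finite field could a priori be a product of fields); you either need to check that the center reduces to $R/P$, or, more economically, argue via the discriminant as the paper's own Proposition~\ref{discriminant} suggests --- for $P$ not dividing $d(\Theta)$ the local discriminant is trivial, which rules out ramification since a ramified place contributes $\pg_P^{m_P-1}\neq\Theta_P$ to $\mathfrak{D}$. With that adjustment the argument is complete and matches the standard treatment.
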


\begin{prop} \label{discriminant} 
Let $\mh$ be a quaternion algebra unramified at infinity.\\
A necessary and sufficient condition for an order $\Theta$ to be maximal is that 
\begin{displaymath}
d(\Theta)=\prod_{P \in \Ram(\mh) \setminus \infty} P
\end{displaymath}
\end{prop}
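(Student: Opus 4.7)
The plan is to deduce both directions from two facts already assembled above: the theorem stating that for a maximal order $\Theta$ the different factors as $\mathfrak{D}=\prod_{P\in\Ram(\mh)}\pg^{m_P-1}$, and the proposition that $\Theta' \subsetneq \Theta$ implies $d(\Theta') \subsetneq d(\Theta)$.

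For the necessity direction, I would begin by observing that because $\mh$ has dimension $4$ over its center, each completion $\mh_P\cong M_{r_P}(D)$ satisfies $(r_P m_P)^2=4$. Hence $m_P\in\{1,2\}$, with $m_P=2$ precisely at the ramified places. Since $\mh$ is unramified at infinity by hypothesis, only the finite ramified places contribute, and the cited theorem reduces to
\begin{equation*}
\mathfrak{D}=\prod_{P\in\Ram(\mh)\setminus\infty}\pg.
\end{equation*}
The next step is to compute $n(\pg)$ for such a $P$ by localizing. The completion $\Theta_P$ is the unique maximal order of the local quaternion division algebra $\mh_P$, and its unique maximal two-sided ideal $\pg_P$ is generated by a uniformizer $\pi$ with $\pi^2=u\varpi$, where $\varpi$ generates $P R_P$ and $u$ is a unit. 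From $P\Theta_P=\pg_P^{\,2}$ and the multiplicativity of the reduced norm on ideals one has $n(\pg_P)^2=n(P\Theta_P)=P^2$, hence $n(\pg_P)=P$. Globalising gives $d(\Theta)=n(\mathfrak{D})=\prod_{P\in\Ram(\mh)\setminus\infty} P$, as claimed.

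For the sufficiency direction, suppose $\Theta$ is an arbitrary order with the displayed reduced discriminant. Any order in $\mh$ is contained in some maximal order $\Theta'$, and by the necessity direction just proved, $d(\Theta')$ equals the same product. Thus $d(\Theta)=d(\Theta')$, and the strict-inclusion proposition quoted above forces $\Theta=\Theta'$, so $\Theta$ is itself maximal.

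The main obstacle is the local norm calculation $n(\pg_P)=P$ at ramified finite primes; once this is in hand, necessity is a direct computation and sufficiency is a purely formal argument exploiting the strictly decreasing behaviour of $d(\cdot)$ under proper inclusion. Infinite places cause no trouble because the hypothesis removes them from $\Ram(\mh)$, and the reduced discriminant is by definition an ideal of $R$ to which archimedean places contribute nothing.
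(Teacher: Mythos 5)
The paper itself does not prove this proposition: it is quoted as background from Vign\'eras and Reiner, so there is no in-paper argument to compare against. Your proof is correct and is essentially the standard textbook one: necessity via $m_P\in\{1,2\}$, the factorization $\mathfrak{D}=\prod\pg^{m_P-1}$, and the local computation $n(\pg)=P$ at ramified finite primes; sufficiency by embedding $\Theta$ in a maximal order and invoking the strict monotonicity of $d(\cdot)$ under proper inclusion. As a small shortcut, the local norm computation could be bypassed entirely by citing Theorem \ref{two-sided} of the appendix, which already asserts that the prime two-sided ideal above a ramified $P$ has reduced norm $P$; otherwise your localization argument is fine, relying only on the multiplicativity of the reduced norm on two-sided ideals of a maximal order and on the fact that every order is contained in a maximal one, both standard facts from Reiner.
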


In the case of infinite places $P$, the $P$-adic completion can be $\R$ (\emph{real primes}) or $\C$ (\emph{complex primes}).
Complex primes are never ramified \cite{Re}.

\begin{theo} \label{two-sided}
The two-sided ideals of a maximal order $\Theta$ form a commutative group with respect to multiplication, which is generated by the ideals of $R$ and the ideals of reduced norm $P$, where $P$ varies over the prime ideals of $R$ that are ramified in $\mh$.
\end{theo}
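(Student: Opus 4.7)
The plan is to combine the three results already in hand: the theorem that two-sided ideals of an order form a free group on the primes, Theorem~\ref{1-1} giving the bijection $\pg \leftrightarrow P = R\cap\pg$, and the preceding theorem asserting $P\Theta = \pg^{m_P}$ with $\mathfrak{D}=\prod_{P\in\Ram(\mh)}\pg^{m_P-1}$. Since $\mh$ is a quaternion algebra its local index $m_P$ divides $2$, so $m_P\in\{1,2\}$ and the dichotomy ``unramified vs.\ ramified'' determines everything.

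First I would note that, by the earlier free-group theorem, every two-sided ideal of $\Theta$ is uniquely a product of prime two-sided ideals, so it suffices to describe the primes $\pg$ and express each one in terms of the claimed generators. By Theorem~\ref{1-1}, each prime $\pg$ of $\Theta$ sits above a unique prime $P$ of $R$, and from $P\Theta = \pg^{m_P}$ I would split into two cases. If $P$ is unramified ($m_P=1$), then $\pg = P\Theta$ is the extension of an $R$-ideal, so $\pg$ is already in the subgroup generated by ideals of $R$. If $P$ is ramified ($m_P=2$), then $\pg^2 = P\Theta$, and taking reduced norms gives $n(\pg)^2 = n(P\Theta) = P^2$ (since $\Theta_P$ has rank~$2$ over $R_P$ locally at $P$ in the quaternion setting), hence $n(\pg) = P$. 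Thus the ramified primes $\pg$ are precisely the prime two-sided ideals of reduced norm equal to a ramified prime $P$ of $R$, and conversely their squares recover $P\Theta$.

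Putting these two cases together, the full generating set for the group of two-sided ideals consists of the extensions $P\Theta$ for all primes $P$ of $R$ (equivalently, the ideals of $R$) together with the $\pg$ of reduced norm $P$ for $P\in\Ram(\mh)\setminus\infty$; these are exactly the generators claimed in the theorem. For commutativity, I would argue that $P\Theta$ is generated by elements of $R \subset K$, the center of $\mh$, hence is central and commutes with every two-sided ideal; and for two distinct ramified primes $\pg, \pg'$ lying above $P\neq P'$, the relation $\pg\pg' = \pg'\pg$ can be checked locally, since localizing at any prime $Q$ of $R$ makes at most one of $\pg,\pg'$ a proper ideal of the maximal local order $\Theta_Q$ (the other localizes to $\Theta_Q$ itself).

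The main obstacle is really the computation $n(\pg) = P$ for ramified primes, which relies on reducing to the local picture: over $K_P$, the algebra $\mh_P$ is the unique quaternion division algebra, its unique maximal order has a unique two-sided maximal ideal whose square is the uniformizer times the order, and the reduced norm of that ideal is the maximal ideal of $R_P$. Once this local calculation is granted, the global statement is an immediate bookkeeping consequence of the free-group structure and the bijection of Theorem~\ref{1-1}.
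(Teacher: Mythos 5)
The paper does not prove this theorem at all: it appears in the background appendix on quaternion algebras, where the text explicitly defers all proofs to Reiner and Vign\'eras, so there is no in-paper argument to compare against. Your proposal is the standard proof of that cited result and is correct in substance: reduce to prime two-sided ideals via the free-group theorem, use the bijection of Theorem~\ref{1-1} and the relation $P\Theta=\pg^{m_P}$ with $m_P\in\{1,2\}$, conclude $\pg=P\Theta$ in the unramified case and $n(\pg)=P$ in the ramified case, and check commutativity by centrality of $P\Theta$ plus a local (coprimality) argument for distinct ramified primes. One wording slip: your justification of $n(P\Theta)=P^2$ via ``$\Theta_P$ has rank $2$ over $R_P$'' is not right as stated ($\Theta_P$ has rank $4$); the correct reason is that the reduced degree is $2$, so the reduced norm of a central uniformizer $\pi$ is $\pi\bar\pi=\pi^2$. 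You also implicitly use multiplicativity of the reduced norm on two-sided ideals of a maximal order to get $n(\pg^2)=n(\pg)^2$; that is standard but worth flagging as an ingredient rather than a triviality.
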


\bibliographystyle{IEEE}

{\small

}

\end{document}